\newtheorem{theorem}{Theorem}[section]
\newtheorem{lemma}{Lemma}[section]
\newtheorem{definition}{Definition}
\newtheorem{observation}{Observation}
\newtheorem{assumption}{Assumption}
\newtheorem{remark}{Remark}
\newenvironment{proof}{\paragraph{Proof:}}{\hfill$\square$}
\title{Reallocating Multiple Facilities on the 
Line\thanks{
		Loukas Kavouras is partially supported by a scholarship from the State 
		Scholarships Foundation, granted by the action ``Scholarships Grant 
		Programme for second cycle graduate studies", which is co-financed by  
		Greece and the European Union (European Social Fund- ESF) through the 
		Operational Programme "Human Resources Development, Education and 
		Lifelong Learning 2014- 2020". Stratis Skoulakis is partially supported by a 
		scholarship of Onnasis Foundation. Philip Lazos is supported by the ERC 
		Advanced Grant 788893 (AMDROMA). The names of the authors are in 
		alphabetical order.}
	}
\author[1]{Dimitris Fotakis}
\author[1]{Loukas Kavouras}
\author[1]{Panagiotis Kostopanagiotis}
\author[2]{Philip Lazos}
\author[1]{Stratis Skoulakis}
\author[1]{Nikolas Zarifis}
\affil[1]{National Technical University of Athens}
\affil[2]{Sapienza University of Rome}
\date{}
\begin{document}

\maketitle

\begin{abstract}
 We study the multistage $K$-\emph{facility reallocation problem} on the real 
 line, where we maintain $K$ facility locations over $T$ stages, based on the 
 stage-dependent locations of $n$ agents. Each agent is connected to the 
 nearest facility at each stage, and the facilities may move from one stage to 
 another, to accommodate different agent locations. The objective is to minimize 
 the connection cost of the agents plus the total moving cost of the facilities, 
 over all stages. $K$-\emph{facility reallocation} was introduced 
 by~\Citet{KW2018}, where they mostly focused on the special case of a single 
 facility. Using an LP-based approach, we present a polynomial time algorithm 
 that computes the optimal solution for any number of facilities. We also 
 consider online $K$-\emph{facility reallocation}, where the algorithm becomes 
 aware of agent locations in a stage-by-stage fashion. By exploiting an 
 interesting connection to the classical $K$-\emph{server problem}, we present a 
 constant-competitive algorithm for $K = 2$ facilities.
\end{abstract}

\section{Introduction}

Facility Location is a classical problem that has been widely studied in both 
combinatorial optimization and operations research, due to its many practical 
applications. It provides a simple and natural model for industrial planning, 
network design, machine learning, data clustering and computer vision 
\citet{DH2002,L2011,CGM16,BSU13}. In its basic form, $K$-Facility Location 
instances are defined by the locations of $n$ agents in a metric space. The goal 
is to find $K$ facility locations so as to minimize the sum of distances of the 
agents to their nearest facility.

In many natural location and network design settings, agent locations are not 
known in advance. Motivated by this fact, \citet{M2001} introduced online facility 
location problems, where agents arrive one-by-one and must be 
\emph{irrevocably} assigned to a facility upon arrival. Moreover, the fast 
increasing volume of available data and the requirement for responsive services 
has led to new, \emph{online} clustering algorithms~\citet{LSS2016},
balancing the quality of the clusters with their rate of change over time. In 
practical settings related to online data clustering, new data points arrive, and 
the decision of clustering some data points together should not be regarded as 
irrevocable (see e.g., \citet{F2011} and the references therein). 

More recently, understanding the dynamics of temporally evolving social or 
infrastructure networks has been the central question in many applied areas such 
as viral marketing, urban planning etc. \emph{Dynamic facility location} 
proposed by \citet{EMS2014} has been a new tool to analyze temporal aspects 
of such networks. In this time dependent variant of facility location, agents may 
change their location over time and we look for the best tradeoff between the 
optimal connections of agents to facilities and the stability of solutions between  
consecutive timesteps. The stability of the solutions is modeled by introducing 
an additional moving cost (or switching cost), which has a different definition 
depending on the particular setting.

\smallskip\noindent\textbf{Model and Motivation.}
In this work, we study the multistage $K$-\emph{facility reallocation problem} on 
the real line, introduced by \Citet{KW2018}. 
In \emph{$K$-facility reallocation}, $K$ facilities are initially located at 
$(x_1^0,\ldots,x_K^0)$ on the real line. Facilities are meant to serve $n$ agents 
for the next $T$ days. At each day, each agent connects to the facility closest to 
its location and incurs a connection cost equal to this distance. The locations of 
the agents may change every day, thus we have to move facilities accordingly in 
order to reduce the connection cost. Naturally, moving a facility is not for free, 
but comes with the price of the distance that the facility was moved. Our goal is 
to specify the exact positions of the facilities at each day so that the total 
connection cost plus the total moving cost is minimized over all $T$ days. In the 
online version of the problem, the positions of the agents at each stage $t$ are 
revealed only after determining the locations of
the facilities at stage $t-1$.

For a motivating example, consider a company willing to advertise its products. 
To this end, it organizes $K$ advertising campaigns at different locations of a 
large city for the next $T$ days. Based on planned events, weather forecasts, 
etc., the company estimates a population distribution over the locations of the 
city for each day. Then, the company decides to compute the best possible 
campaign reallocation with $K$ campaigns over all days (see also 
\citep{KW2018} 
for more examples).  

\Citet{KW2018} fully characterized the optimal offline and online algorithms for 
the 
special case of a single facility and presented a dynamic programming algorithm 
for $K \geq 1$ facilities with running time exponential in $K$. Despite the 
practical significance and the interesting theoretical properties of $K$-facility 
reallocation, its computational complexity and its competitive ratio (for the online 
variant) are hardly understood. 

\smallskip\noindent\textbf{Contribution.}
In this work, we resolve the computational complexity of $K$-\emph{facility 
reallocation} on the real line and take a first step towards a full understanding of 
the competitive ratio for the online variant.  
More specifically, in Section~\ref{s:offline}, we present an optimal algorithm with 
running time polynomial in the combinatorial parameters of $K$-\emph{facility 
reallocation} (i.e., $n$, $T$ and $K$). This substantially improves on the 
complexity of the algorithm, presented in \citep{KW2018}, that is exponential in 
$K$. Our algorithm solves a Linear Programming relaxation and then 
\emph{rounds} the \emph{fractional solution} to determine the positions of the 
facilities. The main technical contribution is showing that a simple rounding 
scheme yields an integral solution
that has the exact same cost as the \emph{fractional one}.

Our second main result concerns the \emph{online version} of the problem with 
$K = 2$ facilities. 
%
We start with the observation that online $K$-\emph{facility reallocation problem} 
with $K \geq 2$ facilities is a natural and interesting generalization of the 
classical $K$-\emph{server problem}, which has been a driving force in the 
development of online algorithms for decades. The key difference is that, in the 
$K$-\emph{server problem}, there is a single agent that changes her location at 
each stage and a single facility has to be relocated to this new location at each 
stage. Therefore, the total connection cost is by definition $0$, and we seek to 
minimize the total moving cost. 

From a technical viewpoint, the $K$-\emph{facility reallocation problem} poses a 
new challenge, since it is \emph{much} harder to track the movements of the 
optimal algorithm as the agents keep coming. It is not evident at all whether 
techniques and ideas from the $K$-\emph{server problem} can be applied to the 
$K$-\emph{facility reallocation} problem, especially for more general metric 
spaces. As a first step towards this direction, we design a constant-competitive 
algorithm, when $K=2$. Our algorithm appears in Section~\ref{s:online} and is 
inspired by the \emph{double coverage
algorithm} proposed for the $K$-\emph{server problem} \citet{K2009}.  

\smallskip\noindent\textbf{Related Work.}
We can cast the \emph{$K$-facility reallocation problem} as a clustering problem 
on a temporally evolving metric. From this point of view, \emph{$K$-facility 
reallocation problem} is a dynamic $K$-\emph{median} problem. A closely 
related problem is the \emph{dynamic facility location problem}, 
\citet{EMS2014,AFS2017}. Other examples in this setting are the \emph{dynamic 
sum radii clustering}~\citet{BS2017} and multi-stage optimization problems on 
matroids and graphs~\citet{GTW2014}. 

In \citet{FS2011}, a mobile facility location problem was introduced, which can 
be 
seen as a
one stage version of our problem. They showed that even this version of the 
problem is $NP$-hard in general metric spaces using an approximation 
preserving reduction to $K$-\emph{median problem}.

Online facility location problems and variants have been extensively studied in 
the literature, see \citet{F2011} for a survey. \citet{DI2011} studied an online 
model, 
where facilities can be moved with zero cost. As we have mentioned before, the 
online variant of the \emph{$K$-facility reallocation problem} is a generalization 
of the \emph{$K$-server problem}, which is one of the most natural online 
problems.
\citet{K2009}
showed a $(2K-1)$-competitive algorithm for the \emph{$K$-server problem} for 
every metric space, which is also
$K$-competitive, in case the metric is the real line \citet{BK2000}. Other variants 
of
the \emph{$K$-server problem} include the \emph{$(H,K)$-server problem}
\citet{BEJK2017,BEJKP2015}, the \emph{infinite server problem} \citet{CKL2017} 
and
the \emph{$K$-taxi problem}\citet{FRR1990,CK2018}.

\begin{figure}[ht]
\centering
$
\begin{array}{lr}
  (1) & \hspace{-2cm}\text{min}\sum _{t=1}^T \bigg[ \sum\limits_{i\in 
  C}\sum\limits_{j\in V} d(\text{Loc}(i,t),j)x_{ij}^t + \sum\limits_{k\in F} S_k^t \bigg] 
  \\\\

 \sum \limits_{j \in V}x_{ij}^t=1  &\forall i \in C,t \in \{1,T\}\\\\
	x_{ij}^t \leq c_j^t&\forall i \in C,j \in V,t \in \{1,T\}\\\\
     c_j^t=\sum\limits_{k\in F} f_{kj}^t & \forall j \in V, t\in \{1,T\}\\\\
    \sum \limits_{j \in V} f_{kj}^t=1 & \forall k\in F, t\in \{1,T\}\\\\
        S_k^t = \sum \limits_{j,l \in V} d(j,l)S_{kjl}^t & \forall k \in F, t\in \{1,T\} \\\\
        \sum \limits_{j \in V} S_{kjl}^t =f_{kl}^t & \forall k \in F,l \in V,t\in \{1,T\}\\\\
        \sum \limits_{l \in V} S_{kjl}^t = f_{kj}^{t-1} & \forall k \in F,j \in V,t\in \{1,T\}\\\\
        x_{ij}^t,f_{kj}^t,S_{klj}^t \in \{0,1 \} & \forall k \in F,j \in V,t\in \{1,T\}\\\\
\end{array}
$
\caption{Formulation of \emph{K-facility reallocation}}\label{LP}
\end{figure}

\section{Problem Definition and Preliminaries}
\label{sec:prelim}

\begin{definition}[\emph{$K$-Facility Reallocation Problem}]\label{d:problem}
We are given a tuple $(x^0, C)$ as input. The $K$ dimensional vector 
$x^0=(x^0_1,\ldots,x^0_K)$
describes the initial positions of the facilities. The positions of the agents over 
time
are described by $C = (C_1, \ldots, C_T)$. The position of agent $i$ at stage $t$ 
is
$\alpha_i^t$ and $C_t = (\alpha_1^t, \ldots , \alpha_n^t)$ describes the 
positions 
of the agents at stage $t$.
\end{definition}

\begin{definition}
A solution of K-Facility Reallocation Problem is a sequence $x = (x^1, \ldots 
,x^T)$.
Each $x^t= (x_1^t,\ldots,x_K^t)$ is a $K$ dimensional vector that gives
the positions of the facilities at stage $t$ and
$x_k^t$ is the position of facility $k$ at stage $t$. The cost of the solution $x$ is
\[ Cost(x) = \sum_{t=1}^{T} \bigg[\sum 
\limits_{k=1}^K|x_{k}^t-x_{k}^{t-1}|+\sum\limits_{i=1}^{n} \min_{1\leq k \leq 
K}|\alpha_i^t-x_k^t| \bigg]\]
\end{definition}
Given an instance $(x^0,C)$ of the problem, the goal is to find a solution $x$
that minimizes the $Cost(x)$. The term 
$\sum_{t=1}^{T}\sum_{k=1}^K|x_{k}^t-x_{k}^{t-1}|$ describes
the cost for moving the facilities from place to place and we refer to it as 
\emph{moving cost}, while 
the term $\sum_{t=1}^{T}\sum_{i=1}^{n} \min_{1\leq k \leq K}|\alpha_i^t-x_k^t|$ 
describes the connection
cost of the agents and we refer to it as \emph{connection cost}.

In the online setting, we study the special case of $2$-\emph{facility reallocation 
problem}. We evaluate the performance of our algorithm using competitive 
analysis; an algorithm is $c$-competitive if for every request sequence, its online 
performance is at most $c$ times worse (up to a small additive constant) than 
the optimal \emph{offline} algorithm, which knows the entire sequence in 
advance.

\section{Polynomial Time Algorithm}
\label{s:offline}

Our approach is a typical LP based algorithm that consists of two basic steps.

\begin{itemize}
 \item \textbf{Step 1:} Expressing the $K$-\emph{Facility Reallocation Problem} 
 as an Integer Linear Program.
 \item \textbf{Step 2:} Solving \emph{fractionally} the Integer Linear Program and 
 \emph{rounding} the fractional solution to an integral one.
\end{itemize}

\subsection{Formulating the Integer Linear Program}
A first difficulty in expressing the $K$-\emph{Facility Reallocation Problem} as 
an Integer Linear Program is that the positions on the real line are infinite. We 
remove this obstacle with help of the following lemma proved in \citep{KW2018}.

\begin{lemma}\label{l:discrete}
Let $(x_0,C)$ an instance of the $K$-facility reallocation problem.
There exists an optimal solution $x^*$ such that for all stages
$t \in \{1,T\}$ and $k \in \{1,K\}$,
\[x_k^{*t} \in C_1 \cup \ldots \cup C_T \cup x^0\]
\end{lemma}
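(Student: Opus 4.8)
The plan is to prove this by an exchange (perturbation) argument. Fix an optimal solution $x^* = (x^{*1}, \ldots, x^{*T})$ and suppose, for contradiction, that some facility at some stage sits at a point not in the finite set $P = C_1 \cup \cdots \cup C_T \cup \{x^0\}$. Order the points of $P$ on the line; they partition $\mathbb{R}$ into finitely many open intervals (and the points themselves). The idea is to show that on each maximal ``block'' of consecutive stages where a facility stays strictly inside one such open interval, we can slide it monotonically — either left or right — without increasing the total cost, until it hits an endpoint of that interval (which is a point of $P$), or until its movement pattern changes in a way that lets us repeat the argument on a shorter configuration.

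The key observation is that, locally, the cost is a piecewise-linear convex function of a single facility's coordinate when all other facilities, all agent locations, and the discrete assignment of agents to facilities are held fixed. Concretely, I would consider perturbing the positions $x_k^{*t}$ of one facility $k$ across a contiguous run of stages $t \in \{s, s+1, \ldots, e\}$, simultaneously, by a common additive shift $\delta$ (or by a coordinated family of shifts). Within a small enough neighborhood, no agent changes which facility is closest to it and no two of the facility's consecutive positions swap order, so the connection cost contributed by agents served by facility $k$ is a sum of absolute values $|\alpha_i^t - (x_k^{*t}+\delta)|$ — piecewise linear in $\delta$ — and the moving cost terms $|x_k^{*t} - x_k^{*t-1}|$ at the internal boundaries are locally linear, while the two ``external'' moving terms at stages $s$ and $e+1$ change by at most $|\delta|$ in total. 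Hence $\mathrm{Cost}$ restricted to this perturbation is piecewise linear in $\delta$, so it is minimized at a breakpoint; we may therefore push $\delta$ in the non-increasing direction until a breakpoint is reached. Each breakpoint corresponds to either the facility coinciding with an agent position $\alpha_i^t$ (a point of $P$), the facility meeting another facility, or two consecutive positions of facility $k$ becoming equal — in all cases we have either placed the facility on a point of $P$ or strictly reduced the number of ``free'' stage-facility pairs, so the process terminates.

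The main obstacle, and the place that needs care, is bookkeeping the ``external'' moving-cost terms and the interaction between facilities. When we shift facility $k$ over stages $s,\ldots,e$ by $\delta$, the terms $|x_k^s - x_k^{s-1}|$ and $|x_k^{e+1} - x_k^e|$ each change, but by the triangle inequality their combined change is at least $-2|\delta|$ — whereas I want a genuinely non-positive-derivative direction. The fix is to choose the run $\{s,\ldots,e\}$ maximally so that the facility's position is constant across it, or more robustly to allow non-uniform shifts: one can always pick the perturbation that is ``flat'' where the facility was already flat and tapers linearly to zero at the two ends, so that the external moving terms do not increase at all, at the cost of the internal connection terms varying linearly; convexity/piecewise-linearity of the whole expression in the single scalar parameter still guarantees a breakpoint optimum. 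A second subtlety is ensuring the argument terminates: one should define a potential, e.g. the number of pairs $(k,t)$ with $x_k^{*t} \notin P$, and verify it strictly decreases (or stays fixed while some other lexicographic quantity decreases) at each step; since facility meetings and coincidences of consecutive positions can be handled by treating coincident facilities as one, the process halts after finitely many steps with every facility position in $P$, which is exactly the claim. (Alternatively, since the excerpt attributes this lemma to prior work, one may simply cite \citep{KW2018} and sketch the above as the intuition.)
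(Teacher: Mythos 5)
The paper does not actually prove this lemma: it is imported verbatim from \citep{KW2018} and used as a black box, so there is no in-paper argument to compare against. Your sketch is the standard exchange argument and is sound in outline: fix the optimal solution's agent-to-facility assignment, so the cost becomes a sum of absolute-value terms that decomposes across facilities (this also means ``facility meets facility'' is not a breakpoint and interactions between facilities can be ignored entirely); then perturb one facility over a maximal run of stages on which its position is constant. Two spots need tightening. First, your ``tapered'' perturbation does not do what you want: if the shift decays linearly to zero at the ends of the run, the facility's position is no longer constant across the run, so you introduce new \emph{internal} moving cost of order $|\delta|$, which can dominate. The tapering is also unnecessary: shift the whole maximal constant run uniformly by $\delta$, let $g(\delta)$ be the resulting fixed-assignment cost (including the one or two external moving terms), and note that $g$ is convex piecewise linear with $g(0)=\mathrm{Cost}(x^*)$ a global minimum; if the run's common position is not already at a breakpoint, then $0$ lies in the interior of a linearity piece of $g$, so $g$ is constant on that piece and you may slide to either endpoint at no cost. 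The breakpoints are exactly agent positions occurring during the run (points of $P$) and the facility's positions just outside the run (including $x_k^0\in P$ when the run reaches stage $1$). Second, termination is cleanest via run-counting rather than counting pairs $(k,t)\notin P$: each slide either lands a run on a point of $P$ or merges it with an adjacent run, and a facility has at most $T$ maximal runs, so the process halts after finitely many steps. With those two repairs your argument is a complete, self-contained proof of the lemma, which is arguably a service the paper itself does not provide.
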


According to Lemma~\ref{l:discrete}, there exists an optimal solution that locates 
the facilities only at positions where either an agent has appeared or a facility 
was initially lying.
Lemma~\ref{l:discrete} provides an exhaustive search algorithm for the problem 
and is also the basis
for the \emph{Dynamic Programming} approach in \citep{KW2018}. We use 
Lemma~\ref{l:discrete} to formulate our Integer Linear Program.

The set of positions $Pos=C_1 \cup \ldots \cup C_T \cup x^0$ can be 
represented equivalently by a path $P = (V,E)$.
In this path, the $j$-th node corresponds to the $j$-th leftmost position of $Pos$ 
and the distance between two consecutive nodes on the path equals the 
distance of the respective positions on the real line.
Now, the \emph{facility reallocation problem} takes the following 
\emph{discretized form}:
We have a path $P = (V,E)$ that is constructed by the specific instance $(x^0,C)$.
Each facility $k$ is initially located at a node $j \in V$ and at each stage $t$, 
each agent $i$ is also located at a node of $P$.
The goal is to move the facilities from node to node such that the connection 
cost of the agents plus the moving cost
of the facilities is minimized.

To formulate this discretized version as an Integer Linear Program,
we introduce some additional notation. Let $d(j,l)$ be the distance of the nodes 
$j,l \in V$ in $P$, $F$ be the set of facilities
and $C$ be the set of agents. For each $i \in C$, $\text{Loc}(i,t)$ is the node 
where agent $i$ is located at stage $t$.
We also define the following $\{0,1\}$-indicator variables for all $t\in \{1,T\}$:
$x_{ij}^t =1$ if, at stage $t$, agent $i$ connects to a facility located at node $j$,
$f_{kj}^t =1$ if, at stage $t$, facility $k$ is located at node $j$,
$S_{kjl}^t =1$ if facility $k$ was at node $j$ at stage $t-1$ and moved to node 
$l$ at stage $t$.
Now, the problem can be formulated as the Integer Linear Program depicted in 
Figure\ref{LP}.

The first three constraints correspond to the fact that at every stage $t$, each 
agent $i$ must be connected
to a node $j$ where at least one facility $k$ is located.
The constraint $\sum_{j \in V} f_{kj}^t=1$ enforces each facility $k$ to be located 
at exactly one node $j$.
The constraint $S_k^t = \sum_{j,l \in V} d(j,l)S_{kjl}^t$ describes the cost for 
moving facility $k$ from node $j$ to node $l$.
The final two constraints ensure that facility $k$ moved from node $j$ to node 
$l$ at stage $t$ if and only if
facility $k$ was at node $j$ at stage $t-1$ and was at node $l$ at stage $t$
($S_{kjl}^t = 1$ iff $f_{kl}^t=1$ and $f_{kj}^{t-1}=1$). 

We remark that the values of $f_{kj}^0$ are determined by the initial positions
of the facilities, which are given by the instance of the problem. The notation 
$x_{ij}^t$ should not be confused with
$x_k^t$, which is the position of facility $k$ at stage $t$ on the real line .

\subsection{Rounding the Fractional Solution}

\begin{algorithm}[t]
    \SetAlgoLined
    \DontPrintSemicolon
    \KwData{Given the initial positions $x^0 = \{x_1^0,\ldots,x_K^0\}$ of the 
    facilities and
    the positions of the agents $C = \{C_1,\ldots,C_T\}.$}
\begin{itemize}[leftmargin=*]
 \item Construct the path $P$ and the Integer Linear Program~(\ref{LP}).
 \item Solve the relaxation of the Integer Linear Program~(\ref{LP}).
 \item \emph{Rounding}
 \footnotemark: For each stage $t\geq 1$:
 \begin{itemize}
  \item For $m=1,\ldots,K$, find the node  $j_m^t$ such that 
    \[\sum_{\ell = 1}^{j_m^t -1}c_\ell^t \leq m-1 \leq  \sum_{\ell = 1}^{j_m^t}c_\ell^t.
\]
 
  \item Locate facility $m$ at the respective position of  node  $j_m^t$ on the 
  line $$x_m^t \leftarrow d(j,1) + \min_{p \in C_1 \cup \ldots \cup C_T \cup x^0} 
  p.$$
 \end{itemize}     
\end{itemize}
\caption{Algorithm for the offline case}\label{alg:offline}
\end{algorithm}
\footnotetext{the nodes $j_m^t$ can be equivalent calculated with the simpler 
criterion, \emph{$j_m^t$ is the most left
node with $f_{k j }^t > 0$}. See also Section~\ref{s:rounding_general}.}

Our algorithm, described in Algorithm~\ref{alg:offline}, is a simple
rounding scheme of the \emph{optimal fractional solution} of the Integer Linear 
Program of Figure~\ref{LP}.
This simple scheme produces an integral solution
that has the exact same cost with an  optimal fractional solution.

\begin{theorem}~\label{l:main_stageing}
Let $x$ denote the solution
produced by Algorithm~\ref{alg:offline}. Then
\[Cost(x) = \sum_{t=1}^T \bigg[ \sum\limits_{i\in C}\sum_{j\in V} 
d(\text{Loc}(i,t),j)x_{ij}^t +
\sum_{k\in F} S_k^t \bigg]\]
where $x_{ij}^t,S_k^t$ denote the values of these variables in the optimal 
fractional solution of the Integer Linear Program~(\ref{LP}).
\end{theorem}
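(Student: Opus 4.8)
The plan is to show two inequalities: first that $Cost(x) \le \mathrm{LP}^*$, where $\mathrm{LP}^*$ denotes the optimal fractional objective value, and second (which is immediate) that $\mathrm{LP}^* \le Cost(x)$ since the rounded solution is feasible for the integer program and hence its cost is at least the LP optimum. The substance is entirely in the first direction, and there I would decouple it stage by stage: I would bound the \emph{connection cost} incurred by the rounded facilities at stage $t$ by $\sum_{i\in C}\sum_{j\in V} d(\mathrm{Loc}(i,t),j)\,x_{ij}^t$, and separately bound the \emph{moving cost} of facility $m$ between stages $t-1$ and $t$ by the fractional transport term attributable to the ``$m$-th slice'' of the $f$ variables. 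Summing over $m$, $i$, and $t$ then reproduces exactly the claimed fractional objective.

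For the connection cost, the key is to understand what the rounding does: reading the fractional ``opening mass'' $c_\ell^t = \sum_{k} f_{k\ell}^t$ along the path from left to right, it places facility $m$ at the node $j_m^t$ where the cumulative mass crosses the level $m-1$. Since the total mass is $\sum_\ell c_\ell^t = K$, this is well-defined and the $j_m^t$ are weakly increasing in $m$. I would then argue that for any agent $i$, the fractional assignment $x_{ij}^t$ — which, by the constraints $\sum_j x_{ij}^t = 1$ and $x_{ij}^t \le c_j^t$, is a probability distribution dominated by the opening mass — can only connect $i$ to nodes that carry opening mass, and that the nearest rounded facility to $\mathrm{Loc}(i,t)$ is no farther than the fractional ``expected'' connection distance $\sum_j d(\mathrm{Loc}(i,t),j)\,x_{ij}^t$. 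This is a one-dimensional interleaving argument: the rounded facilities sit at quantiles of the opening-mass distribution, so on the line they ``sandwich'' the support in a way that makes the min-distance to a rounded facility at most the cost of any feasible fractional assignment respecting $x_{ij}^t \le c_j^t$. I expect this step — making precise that the quantile rounding never increases the min-connection distance below what the fractional solution pays — to be the main obstacle, and I would handle it by a careful case analysis on whether $\mathrm{Loc}(i,t)$ lies to the left of $j_1^t$, to the right of $j_K^t$, or between two consecutive rounded positions, exploiting that between $j_m^t$ and $j_{m+1}^t$ the cumulative mass increases by at most $1$.

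For the moving cost, I would look at the transport variables $S_{kjl}^t$. The constraints $\sum_l S_{kjl}^t = f_{kj}^{t-1}$ and $\sum_j S_{kjl}^t = f_{kl}^t$ say that, for each $k$, $S_{k\cdot\cdot}^t$ is a coupling between the distributions $f_{k\cdot}^{t-1}$ and $f_{k\cdot}^t$, and $S_k^t = \sum_{j,l} d(j,l) S_{kjl}^t$ is the cost of that coupling. Aggregating over $k$, $\sum_k S_k^t$ is at least the optimal transport cost (earth-mover distance on the path) between the aggregate opening masses $c^{t-1}$ and $c^t$. On the line, earth-mover distance has the closed form $\int \lvert F_{t-1}(\xi) - F_t(\xi)\rvert\, d\xi$, where $F_{t-1}, F_t$ are the cumulative opening-mass functions; and the rounded positions $x_m^{t-1}, x_m^t$ are precisely the generalized inverses $F_{t-1}^{-1}(m-1)$, $F_t^{-1}(m-1)$ of these cumulatives at the integer levels. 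A standard identity then gives $\sum_{m=1}^K \lvert x_m^t - x_m^{t-1}\rvert = \int \lvert F_{t-1}(\xi) - F_t(\xi)\rvert\, d\xi \le \sum_k S_k^t$, so the rounded moving cost is bounded by the fractional one. Putting the two bounds together over all stages yields $Cost(x) \le \mathrm{LP}^*$, and combined with feasibility of $x$ for the IP this forces equality, completing the proof.
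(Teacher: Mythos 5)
There is a genuine gap: both of your per-stage, per-component upper bounds are false, and the theorem cannot be proved by the decoupling you propose. For the moving cost, the ``standard identity'' you invoke is misquoted. The correct Wasserstein identity on the line is $\int |F_{t-1}(\xi)-F_t(\xi)|\,d\xi = \int_0^K |F_{t-1}^{-1}(u)-F_t^{-1}(u)|\,du$; your quantity $\sum_{m=1}^K |F_{t-1}^{-1}(m-1)-F_t^{-1}(m-1)|$ samples the integrand at $K$ integer levels and is \emph{not} bounded by the integral. Concretely, with $K=1$, if $c^{t-1}$ is a unit mass at $0$ and $c^t$ places mass $\epsilon$ at $-M$ and $1-\epsilon$ at $0$, the quantile at level $0$ jumps by $M$ while the transport cost is only $\epsilon M$. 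The per-agent connection bound fails for the same reason: if an agent sits between two nodes each carrying opening mass $1/2$ (with $K=1$), the constraints force $x_{ij}^t=1/2$ on each, so the fractional cost is the average of the two distances, whereas the rounding places the facility at the leftmost positive node, which may be the farther one. So the inequality ``rounded connection cost $\le$ fractional connection cost'' does not hold agent by agent, nor does the moving-cost inequality hold stage by stage; only the \emph{total} balances out, and only for the \emph{optimal} fractional solution.

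The paper's proof is structured precisely to avoid these false intermediate claims. It first reduces (by splitting each node into copies) to the case where every $f_{kj}^t$ is $0$ or $1/N$, and then considers the whole family of shifted roundings $Sol_1,\dots,Sol_N$ (facility $m$ placed at the $((m-1)N+p)$-th positive node, $p=1,\dots,N$). Lemmas~\ref{l:moving_cost} and~\ref{l:connection_cost} show that the \emph{average} over $p$ of the total cost equals $Z_{LP}^*$ — using Observation~\ref{o:1} (each agent fractionally connects to $N$ consecutive positive nodes) and Lemma~\ref{l:switching_cost} (the optimal fractional transport is the monotone matching $Y_j^{t-1}\mapsto Y_j^t$). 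Since each $Sol_p$ is integral and feasible, each costs at least $Z_{LP}^*$, which forces every $Sol_p$, in particular $Sol_1$, to cost exactly $Z_{LP}^*$. The cancellation you would need happens only in this aggregate, across agents, facilities, stages, and shifts simultaneously; to repair your argument you would have to replace the component-wise bounds with this averaging step (or an equivalent global exchange argument exploiting optimality of the fractional solution).
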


Theorem~\ref{l:main_stageing} is the main result of this section and it
implies the optimality of our algorithm.
We remind that by Lemma~\ref{l:discrete}, there is an optimal
solution that locates facilities only in positions $C_1\cup \ldots \cup C_T \cup 
x^0$.
This solution corresponds to an integral solution of our Integer Linear Program,
meaning that $Cost(x^*)$ is greater than or equal to the cost of the 
\emph{optimal fractional solution},
which by Lemma~\ref{l:main_stageing} equals $Cost(x)$.
We dedicate the rest of the section to prove
Theorem~\ref{l:main_stageing}. The proof
is conducted in two steps and each step is exhibited in 
sections~\ref{s:semi_integral} and \ref{s:rounding_general} respectively.

In section~\ref{s:semi_integral}, we present a very simple rounding scheme in the 
case, where the values of the variables of the optimal fractional solution satisfy 
the following assumption.

\begin{assumption}\label{a:1}
Let $f_{jk}^t$ and $c_j^t$ be either $1/N$ or $0$, for some positive integer $N$.
\end{assumption}
\noindent Although Assumption~\ref{a:1} is very restrictive and its not generally 
satisfied, 
it is the key step for proving the optimality guarantee
of the rounding scheme presented in Algorithm~\ref{alg:offline}. Then, in 
section~\ref{s:rounding_general} we use the rounding scheme of 
section~\ref{s:semi_integral} to prove Theorem~\ref{l:main_stageing}.
In the upcoming sections,
$c_j^t,x_{ij}^t,f_{kj}^t,S_{kjl}^t,S_k^t$  will denote the values of
these variables in the \emph{optimal fractional} solution of the ILP~(\ref{LP}).

\subsection{Rounding Semi-Integral solutions}
\label{s:semi_integral}
Throughout this section, we suppose that Assumption~\ref{a:1} is satisfied; 
$f_{k j}^t$ and $c_j^t$ are either $1/N$ or $0$ , for some positive integer $N$. If 
the optimal fractional solution meets these requirements, then the integral 
solution presented in Lemma~\ref{l:first}
has the same overall cost. The goal of the section is to prove Lemma~\ref{l:first}.

\begin{definition}\label{d:positive}
$V_t^{+}$ denotes the set of nodes of $P$ with a positive amount of facility 
($c_j^t$) at stage $t$,
\[j \in V_{t}^+ \text{   if and only if    } c_j^t > 0\]
\end{definition}

\noindent We remind that since $c_j^t = 1/N$ or $0$, $|V_t^+| = K \cdot N$. We 
also consider the nodes in $V_t^{+} = \{Y_1^t,\ldots,Y_{K \cdot N}\}$ to be 
ordered from left to right. 

\begin{lemma}\label{l:first}
Let $Sol$ be the integral solution that at each stage $t$ places the $m$-th facility 
at the $(m-1)N + 1$ node of $V_t^+$
i.e. $Y_{(m-1)N + 1}^t$. Then, $Sol$ has the same cost
as the optimal fractional solution. 
\end{lemma}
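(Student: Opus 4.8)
The plan is to show that the integral solution $Sol$ has (a) connection cost no larger than the fractional connection cost, and (b) moving cost equal to the fractional moving cost; since the fractional solution is optimal and $Sol$ is feasible, both inequalities must in fact be equalities, which proves the claim. The key structural fact I would exploit first is that under Assumption~\ref{a:1}, the fractional solution looks like $N$ superimposed integral solutions: at each stage $t$, the multiset $V_t^+$ consists of $KN$ nodes (counted with the $1/N$ weights this is exactly $K$ units of facility), and I want to think of the ordered nodes $Y_1^t, \ldots, Y_{KN}^t$ as being partitioned into $K$ consecutive blocks of size $N$, with block $m$ "belonging" to facility $m$. The representative $Y_{(m-1)N+1}^t$ that $Sol$ uses is the leftmost node of block $m$.

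First I would handle the connection cost. At stage $t$, each agent $i$ is fractionally served by the facility mass to its left and right; because the instance is on a line, an optimal fractional assignment sends agent $i$'s demand to the nearest available facility mass, and the connection cost of agent $i$ equals $\sum_j d(\mathrm{Loc}(i,t),j)\,x_{ij}^t$, which is at least $\min_k d(\mathrm{Loc}(i,t), Y_{k}^t)$ averaged appropriately — more carefully, I would argue that the distance from $\mathrm{Loc}(i,t)$ to the nearest node among the $K$ representatives $\{Y_{(m-1)N+1}^t\}_{m=1}^K$ is at most the fractional connection cost of agent $i$. The clean way to see this: the nearest representative is within the same or an adjacent block as whatever mass $i$ is served by, and consecutive representatives sandwich $N$ units of block mass, so a convexity/averaging argument on the line gives $\min_m d(\mathrm{Loc}(i,t), Y_{(m-1)N+1}^t) \le \sum_j d(\mathrm{Loc}(i,t),j) x_{ij}^t$. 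Summing over agents gives (a).

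Next, the moving cost. Here I would use the transportation variables $S_{kjl}^t$. Fix a facility index $m$ and stage $t$; the fractional solution moves $f_{k\cdot}^{t-1}$ to $f_{k\cdot}^{t}$ at cost $S_k^t$, but I would instead reason about the aggregate mass: the block of $N$ units assigned to "facility $m$" at stage $t-1$ (namely nodes $Y_{(m-1)N+1}^{t-1},\ldots,Y_{mN}^{t-1}$ with weight $1/N$) must be transported to the corresponding block at stage $t$. Because we are on a line and blocks are consecutive and non-crossing, the optimal transportation cost between these two unit-mass distributions is exactly the distance between their "leftmost" order statistics summed with the matching of order statistics — and crucially, moving the leftmost point $Y_{(m-1)N+1}^{t-1}$ to $Y_{(m-1)N+1}^{t}$ costs $d(Y_{(m-1)N+1}^{t-1}, Y_{(m-1)N+1}^{t})$, which by the non-crossing/line structure is at most the average over the block of the pointwise displacements, i.e.\ at most the fractional moving cost restricted to that block. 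Summing over $m$ and $t$ gives that $Sol$'s moving cost is at most the fractional moving cost; combined with (a) and optimality of the fractional solution, equality holds throughout.

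The main obstacle I anticipate is making the "leftmost representative" argument airtight, in particular justifying that the fractional optimum can be taken to route mass monotonically (no crossings) both in the connection assignment and in the stage-to-stage transportation — this is where the line structure does all the work, and one has to be careful that the $S_{kjl}^t$ variables in the given LP don't a priori enforce this, so I would either invoke an exchange argument showing an optimal fractional solution with monotone routing exists, or argue directly that the LP value already equals the cost of the monotone coupling. A secondary technical point is bookkeeping the identification of which $1/N$-chunks form "block $m$": since the LP treats the $K$ facilities as labeled but the $c_j^t = \sum_k f_{kj}^t$ aggregation loses labels, I would define the blocks purely from the ordered node sequence $Y_1^t,\ldots,Y_{KN}^t$ and then show this induced labeling is consistent with some optimal choice of the $f_{kj}^t$, so that the per-facility moving-cost accounting goes through.
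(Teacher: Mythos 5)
Your overall strategy (show $Cost(Sol)\le Z_{LP}^*$, then invoke feasibility of $Sol$ for the reverse inequality) is sound, but both of the pointwise inequalities you rely on to get the upper bound are false. For the connection cost, take $K=1$, $N=2$, with $V_t^+=\{Y_1^t,Y_2^t\}=\{0,10\}$ and an agent at $10$ served with weight $1/2$ by each node: the fractional connection cost is $5$, but the distance to the unique representative $Y_1^t=0$ is $10$. So the nearest-representative distance is \emph{not} bounded by the fractional connection cost; the agent's block of $N$ serving nodes (Observation~\ref{o:1}) need not contain a representative on its cheap side. The moving-cost claim fails for the same reason: with a block $\{0,1\}$ at stage $t-1$ moving to $\{5,5\}$, the leftmost-to-leftmost displacement is $5$ while the average pointwise displacement is $4.5$, so the displacement of the leftmost order statistic is not dominated by the block average. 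The obstacle you flag (justifying monotone, non-crossing routing) is real but secondary --- the paper does prove the monotone-coupling fact as Lemma~\ref{l:switching_cost} --- and resolving it would still not rescue your two inequalities.

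The missing idea is to average over all $N$ shifted integral solutions rather than analyze $Sol=Sol_1$ alone. The paper defines $Sol_p$ (place facility $m$ at $Y_{(m-1)N+p}^t$) and shows that $\frac{1}{N}\sum_{p=1}^N$ of the moving costs equals the fractional moving cost (via Lemma~\ref{l:switching_cost}) and likewise for each agent's connection cost (via Observation~\ref{o:1}: the $N$ serving nodes are consecutive, so exactly one representative of each $Sol_p$ lands in the block and the average over $p$ recovers the fractional average). Hence $\frac{1}{N}\sum_p Cost(Sol_p)=Z_{LP}^*$, and since every $Sol_p$ is feasible, $Cost(Sol_p)\ge Z_{LP}^*$ for each $p$, forcing $Cost(Sol_p)=Z_{LP}^*$ for all $p$, in particular for $p=1$. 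In your single-solution version, the connection cost of $Sol_1$ can strictly exceed the fractional one while its moving cost is strictly smaller (or vice versa), and only the sum, averaged over the shifts, is controlled.
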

\noindent The term \textbf{$\textbf{m}$-th facility} refers to the ordering of the 
facilities on the real line according to their initial positions 
$\{x_1^0,\ldots,x_K^0\}$. The proof of Lemma~\ref{l:first} is quite technically 
complicated, however it is based on two intuitive observations about the 
optimal fractional solution.

\begin{observation}\label{o:1}
The set of nodes at each \textbf{agent} $i$ \textbf{connects} at stage $t$ are 
\textbf{consecutive} nodes of $V_t^+$.
More precisely, there exists a set 
$\{Y_\ell^t,\ldots,Y_{\ell + N -1}^t\} \subseteq V_t^+$ such that

\[\sum_{j \in V}d(\text{Loc}(i,t),j)x_{ij}^t = \frac{1}{N}\sum_{h = \ell}^{\ell+ N 
-1}d(\text{Loc}(i,t),Y_h^t)\]

\end{observation}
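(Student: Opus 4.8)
The plan is to show that, for each fixed agent $i$ and stage $t$, the numbers $\{x_{ij}^t\}_{j\in V}$ of the optimal fractional solution solve a one‑dimensional transportation problem whose optimal value is realized by spreading unit mass uniformly over a consecutive block of $V_t^+$. First I would observe that the variables $\{x_{ij}^t\}_{j\in V}$, for a \emph{fixed} pair $(i,t)$, occur in the linear program of Figure~\ref{LP} only through the objective summand $\sum_{j\in V} d(\text{Loc}(i,t),j)\,x_{ij}^t$ and the two constraints $\sum_{j\in V} x_{ij}^t=1$ and $x_{ij}^t\le c_j^t$; in particular they are coupled to none of the facility variables $f_{kj}^t$, the transition variables $S_{kjl}^t$, the capacities $c_j^t$, or the connection variables of other agents or stages. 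Hence, freezing every other coordinate of the optimal fractional solution, optimality forces $(x_{ij}^t)_{j\in V}$ to minimize $\sum_{j\in V} d(\text{Loc}(i,t),j)\,x_{ij}^t$ over $\{x\ge 0:\ \sum_j x_j=1,\ x_j\le c_j^t\}$. Under Assumption~\ref{a:1}, this feasible set consists exactly of the probability vectors supported on $V_t^+$ (Definition~\ref{d:positive}) that put at most $1/N$ mass on each node, and it is non‑empty since $|V_t^+|=K\cdot N\ge N$.

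Next I would solve this transportation problem by a standard exchange argument. Write $p=\text{Loc}(i,t)$ and order $V_t^+$ by non‑decreasing distance to $p$. In any feasible $x$, if some node $j$ with $x_j>0$ is farther from $p$ than some node $j'$ with $x_{j'}<1/N$, shifting an infinitesimal amount of mass from $j$ to $j'$ preserves feasibility and does not increase the objective; iterating shows the minimum equals $\frac{1}{N}\sum_{Y\in S} d(p,Y)$, where $S\subseteq V_t^+$ is a \emph{minimum‑cost} $N$‑element subset (equivalently, a set of $N$ nodes of $V_t^+$ nearest to $p$, with ties allowed at the boundary). Since a linear objective over this polytope is minimized at a vertex and each vertex is the uniform distribution on some $N$‑element subset of $V_t^+$, the optimal fractional solution's connection cost for agent $i$ at stage $t$ is exactly this minimum, regardless of how its mass is actually spread.

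The geometric core of the argument is that such an $S$ may be taken to be a consecutive block $\{Y_\ell^t,\ldots,Y_{\ell+N-1}^t\}$ of the left‑to‑right ordering of $V_t^+$. I would prove this via the betweenness property of the line: if $Y_a^t,Y_c^t\in S$ with $a<c$ while some $Y_b^t\notin S$ has $a<b<c$, then $Y_b^t$ lies between $Y_a^t$ and $Y_c^t$ on the line, so $d(p,Y_b^t)\le\max\{d(p,Y_a^t),d(p,Y_c^t)\}$; replacing whichever of $Y_a^t,Y_c^t$ realizes that maximum by $Y_b^t$ produces another minimum‑cost $N$‑subset whose indices span a strictly shorter interval. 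Hence a minimum‑cost $N$‑subset whose indices span the shortest interval has no gaps, i.e., it is a block $\{Y_\ell^t,\ldots,Y_{\ell+N-1}^t\}$. Combining the three steps yields the displayed identity
\[
\sum_{j\in V} d(\text{Loc}(i,t),j)\,x_{ij}^t=\frac{1}{N}\sum_{h=\ell}^{\ell+N-1} d\bigl(\text{Loc}(i,t),Y_h^t\bigr),
\]
and, after replacing each agent's fractional assignment by the uniform distribution on this block (which changes neither feasibility nor cost), the support of $x_{i\cdot}^t$ is exactly that consecutive block, matching the first assertion of the statement.

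I expect the main obstacle to be the bookkeeping around ties and the non‑uniqueness of the optimal fractional solution: a priori the solution handed to us could place an agent's mass on a non‑consecutive set of boundary‑equidistant nodes, so turning the ``consecutive support'' phrasing into a precise statement requires either an explicit tie‑breaking rule in the exchange argument or the observation (used above) that passing to the uniform‑on‑a‑block assignment is cost‑preserving. The remaining ingredients — the decoupling of the $(i,t)$ block of variables and the exchange/betweenness arguments — are routine once set up.
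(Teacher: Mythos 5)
Your proposal is correct and rests on the same core idea as the paper's proof: the variables $(x_{ij}^t)_{j\in V}$ for a fixed $(i,t)$ decouple from the rest of the LP, so optimality forces them to solve a small capacitated transportation problem on the line, which an exchange/betweenness argument shows is optimized by spreading $1/N$ mass over $N$ consecutive nodes of $V_t^+$. If anything, your treatment is slightly more complete than the paper's one-line conclusion, since your vertex argument and tie-handling explicitly cover the case where the given optimal solution leaves partial mass on the two boundary nodes of its support.
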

\begin{proof}
Let an agent $i$ that at some stage $t$ has $x_{i Y_j^t}^t>0,x_{i Y_\ell^t}^t< 
1/N$ and $x_{i Y_h^t}^t > 0$ for some $j < \ell < h$. Assume that $\text{Loc}(i,t) 
\leq Y_\ell^t$ and to simplify notation consider $x_\ell = x_{i Y_\ell^t }^t,x_h = 
x_{i Y_h^t }^t $.
Now, increase $x_\ell$ by $\epsilon$ and decrease $x_h$ by $\epsilon$, where 
$\epsilon = \min(1/N - x_\ell,x_h)$. Then, the cost of the solution is decreased by 
$(d(\text{Loc}(i,t),h) - d(\text{Loc}(i,t),\ell))\epsilon > 0$, thus contradicting the 
optimality of the solution. The same argument holds if $\text{Loc}(i,t) \geq 
Y_\ell^t$. The proof follows since 
$\sum_{j \in V}x_{ij}^t = 1$.
\end{proof}

\begin{observation}\label{o:2}
Under Assumption~\ref{a:1},
the $m$-th facility places amount of facility $f_{m j}^t = 1/N$ from the $(m-1)N + 
1$ to the $m N$ node of $V_t^+$ i.e. to nodes $\{Y^t_{(m-1)N + 1},\ldots, Y_{m 
N}^t\}$.
\end{observation}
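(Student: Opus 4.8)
The plan is to show that we may take the optimal fractional solution to be the \emph{monotone} one: the solution in which, at every stage $t$, facility $m$ owns exactly the nodes $\{Y^t_{(m-1)N+1},\dots,Y^t_{mN}\}$ of $V_t^+$ with $f^t_{mj}=1/N$ on them, and the mass of each facility is carried from stage $t-1$ to stage $t$ by the rank‑to‑rank (monotone) transport on the line. The crucial point is that the connection‑cost part of the objective is a function of the variables $x^t_{ij}$ alone, and these are constrained only through $x^t_{ij}\le c^t_j$; hence, if we modify a solution by keeping all $c^t_j$ (and $x^t_{ij}$) fixed and only re‑choosing the $f^t_{kj}$ and $S^t_{kjl}$, the connection cost does not change and feasibility is preserved. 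So it is enough to prove that the monotone assignment attains the minimum moving cost among all $(f,S)$ consistent with the given, fixed values $c^t_j$.

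First I would record the combinatorial picture that Assumption~\ref{a:1} forces. Since $f^t_{kj}\in\{0,1/N\}$ and $\sum_j f^t_{kj}=1$, every facility occupies exactly $N$ nodes at each stage; since $c^t_j=\sum_k f^t_{kj}\in\{0,1/N\}$, no two facilities can share a node. Thus at stage $t$ the supports of the $K$ facilities partition $V_t^+$ — which has exactly $KN$ nodes — into $K$ sets of size $N$, and choosing a feasible $(f,S)$ amounts to choosing such a partition at every stage together with a transport plan between consecutive stages.

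The heart of the argument is a short optimal‑transport computation. Write $M^t$ for the total facility mass at stage $t$; this measure is the same for every candidate solution here, namely $M^0=\sum_{m}\delta_{x^0_m}$ and, for $t\ge1$, $M^t$ = the measure putting $1/N$ on each node of $V_t^+$. For any facility $k$, the variables $S^t_{kjl}$ form a coupling of $f^{t-1}_k$ and $f^t_k$ (that is exactly what the two marginal constraints say), so $S^t_k=\sum_{j,l}d(j,l)S^t_{kjl}$ is at least the optimal transportation cost (earthmover distance) between $f^{t-1}_k$ and $f^t_k$ on the line; summing over $k$ and using that the sum of couplings of $(\mu_i,\nu_i)$ is a coupling of $(\sum_i\mu_i,\sum_i\nu_i)$ gives, for \emph{every} such $(f,S)$, moving cost $\ge\sum_{t}\mathrm{OT}(M^{t-1},M^t)$. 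On the other hand the monotone assignment attains this bound with equality: on the real line the optimal transport from $M^{t-1}$ to $M^t$ is the quantile coupling, and because each $M^t$ breaks up, in the quantile order, exactly into the blocks $\{Y^t_{(m-1)N+1},\dots,Y^t_{mN}\}$, the quantile coupling decomposes into block‑to‑block transports; taking $f^t_m=\tfrac1N$ on block $m$ and $S$ equal to these block transports is feasible and has moving cost precisely $\sum_t\mathrm{OT}(M^{t-1},M^t)$. Hence the monotone assignment is moving‑cost‑optimal, therefore overall optimal, and by construction it has the asserted form.

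I expect the bulk of the writing to go into two places. One is the routine but necessary check that the re‑labelling keeps the LP feasible and leaves the connection cost untouched, so that comparing only moving costs is legitimate — this is where it matters that we never disturb the $c^t_j$. The other, and the only genuinely delicate point, is the $t=1$ boundary: there $M^0$ is a sum of $K$ unit atoms rather than a $1/N$‑uniform measure, and one must verify that the quantile coupling of $M^0$ with $M^1$ sends the atom $\delta_{x^0_m}$ — the $m$‑th leftmost initial facility — onto block $m$ of $V_1^+$. This is exactly the step that identifies ``the $m$‑th facility'' (in initial left‑to‑right order) with the $m$‑th block, and is what makes the statement come out as stated rather than merely ``some facility owns each block''. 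The underlying transport facts (the coupling lower bound, subadditivity over sums of measures, and optimality of the monotone/quantile coupling on $\mathbb{R}$) are standard and can be invoked in a sentence each.
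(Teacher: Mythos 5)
Your proposal is correct and follows essentially the same route as the paper. The paper does not prove Observation~\ref{o:2} directly; it proves the equivalent cost identity, Lemma~\ref{l:switching_cost}, by exactly your plan: observe that the connection cost depends only on the $c_j^t$, so the $f,S$ variables must minimize the moving cost subject to the fixed marginals; aggregate the per-facility couplings into a single transportation LP between $V_{t-1}^+$ and $V_t^+$ (your ``sum of couplings is a coupling of the sums'' step is literally their substitution $F_{jl}^t=\sum_k S_{kjl}^t$); and show the rank-to-rank assignment $F^t_{Y_j^{t-1}Y_j^t}=1/N$ is optimal on the line, which they do by an explicit uncrossing exchange rather than by citing optimality of the quantile coupling. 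The only substantive differences are matters of packaging: the paper stops at the cost identity (which is all the later lemmas use), whereas you additionally recover the per-facility block structure asserted in the observation, including the $t=1$ boundary where the unit atoms $\delta_{x_m^0}$ must be split across block $m$ --- a point the paper glosses over by tacitly treating stage $0$ as $KN$ nodes of mass $1/N$ in Figure~\ref{f:flow}. One small caveat: what your argument (and the paper's) actually establishes is that \emph{some} optimal fractional solution has the monotone block form, i.e., the observation should be read as a without-loss-of-generality statement about the optimal solution rather than a property of every optimizer; this is exactly the form in which it is used downstream, so nothing is lost.
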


\noindent
Observation~\ref{o:2} serves in understanding the structure of the optimal 
fractional solution under Assumption~\ref{a:1}. However, it will be not used in 
this form in the rest of the section. We use Lemma~\ref{l:switching_cost} 
instead, which is
roughly a different wording of Observation~\ref{o:2} and its proof can be found 
in subsection~\ref{app:1} of the Appendix.

\begin{lemma}\label{l:switching_cost}
Let $S^t_k$ the 
fractional moving cost of facility $k$ at stage $t$. Then
\[\sum\limits_{t=1}^{T}\sum\limits_{k \in F} S^t_k =
\frac{1}{N}\sum\limits_{t=1}^{T}\sum\limits_{j=1}^{K\cdot N}d(Y_j^{t-1},Y_j^{t})\]
\end{lemma}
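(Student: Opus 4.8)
The plan is to prove Lemma~\ref{l:switching_cost} by relating the two sides through the $S_{kjl}^t$ variables, using the structure exposed by Observation~\ref{o:2}. First I would fix a stage $t$ and a facility $k$ (say it is the $m$-th facility), and recall that by Observation~\ref{o:2} the fractional mass of facility $m$ at stage $t$ occupies exactly the nodes $\{Y^t_{(m-1)N+1},\ldots,Y^t_{mN}\}$, each with weight $1/N$, and similarly at stage $t-1$ it occupies $\{Y^{t-1}_{(m-1)N+1},\ldots,Y^{t-1}_{mN}\}$ with weight $1/N$. So the transportation constraints $\sum_l S_{mjl}^t = f_{mj}^{t-1}$ and $\sum_j S_{mjl}^t = f_{ml}^t$ describe a fractional matching (a doubly stochastic-type flow, scaled by $1/N$) between these two ordered sets of $N$ nodes, and $S_m^t = \sum_{j,l} d(j,l) S_{mjl}^t$ is the cost of this transport.

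The key step is to argue that, by optimality, this transport is the ``sorted'' one: the $r$-th leftmost node of $V^{t-1}_+$ assigned to facility $m$ is matched to the $r$-th leftmost node of $V^t_+$ assigned to facility $m$, i.e.\ $S^t_{m,Y^{t-1}_{(m-1)N+r},Y^t_{(m-1)N+r}} = 1/N$ for each $r \in \{1,\ldots,N\}$ and all other $S^t_{m,\cdot,\cdot}$ are zero. This is the standard fact that on the line the cost of moving one unit mass from a sorted multiset to another sorted multiset is minimized by the monotone coupling; the exchange argument is the same flavor as in the proof of Observation~\ref{o:1} — if the coupling ever ``crosses'' (node $j<j'$ on the left with $j$ sent to $l'$ and $j'$ sent to $l<l'$ on the right, with positive mass on both), uncrossing them does not increase $\sum d(j,l)S_{mjl}^t$, and by an averaging/limiting argument an optimal fractional solution can be taken to be monotone on each facility's support without changing its cost. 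Summing the resulting identity $S_m^t = \frac1N \sum_{r=1}^{N} d(Y^{t-1}_{(m-1)N+r}, Y^t_{(m-1)N+r})$ over $m = 1,\ldots,K$ collapses the double sum over facilities into a single sum over $j = 1,\ldots,KN$, and then summing over $t$ gives the claimed formula.

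I would also need to handle the boundary term at $t=1$ consistently: the value $f_{mj}^0$ is determined by the initial positions, and $Y^0_j$ should be read off the same ordering of $V^0_+$ (the multiset of initial facility positions, each counted with multiplicity reflecting the $1/N$ convention), so the $t=1$ summand $\frac1N\sum_j d(Y^0_j, Y^1_j)$ matches $\sum_k S^1_k$ by the same monotone-coupling argument — here one should check that Observation~\ref{o:2}'s indexing is compatible with the initial configuration, which is immediate since the $m$-th facility starts at $x_m^0$ and the facilities are ordered by their initial positions.

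The main obstacle I anticipate is the uncrossing/monotonicity argument being slightly delicate at the fractional level: unlike Observation~\ref{o:1}, where a single agent's assignment is a genuine probability vector, here we have a two-sided flow, and one must be careful that uncrossing within facility $m$ does not violate the constraints coupling $S_{mjl}^t$ to $f_{ml}^t$ and $f_{mj}^{t-1}$ (it does not, since uncrossing preserves both marginals) and that the global optimality of the LP solution is not disturbed (it is not, since the objective is non-increasing and the other stages/facilities are untouched). Modulo that care, the lemma reduces to the elementary statement that sorted-to-sorted is the optimal transport on $\mathbb{R}$, which is exactly why the paper calls this ``roughly a different wording of Observation~\ref{o:2}''.
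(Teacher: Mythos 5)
Your overall strategy --- reduce the moving cost to an optimal-transport computation on the line and invoke optimality of the monotone (sorted-to-sorted) coupling via an uncrossing exchange argument --- is exactly the engine of the paper's proof, and your uncrossing step (swap a crossing pair of flows, observe that both marginals are preserved and the cost does not increase) is the same calculation the paper performs. The genuine gap is that you take Observation~\ref{o:2} as an input: you assume the mass of the $m$-th facility occupies exactly the contiguous block $\{Y^t_{(m-1)N+1},\ldots,Y^t_{mN}\}$ and then analyze the per-facility transport. Observation~\ref{o:2} is stated in the paper without proof, purely as intuition, and the text explicitly says it will not be used in that form --- Lemma~\ref{l:switching_cost} is meant to be its rigorous surrogate, so resting the proof on it leaves the hardest structural claim unjustified. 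Worse, the block structure is not literally true of every optimal fractional solution: relabelling facilities (e.g.\ for $N=1$, $K=2$, sending facility $1$ to $Y_2^t$ and facility $2$ to $Y_1^t$ at every stage) can preserve the total cost while destroying the block structure, so Observation~\ref{o:2} could at best be proved in an ``there exists an optimal solution such that'' form, which your argument does not supply.

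The paper avoids this issue entirely by aggregating over facilities. It first notes that, since the connection cost depends only on the $c_j^t$, the variables $f,S$ of the optimal solution must minimize the total moving cost subject to the fixed $c_j^t$; it then relaxes this per-facility LP to a single transportation problem on the aggregated flows $F^t_{jl}=\sum_{k}S^t_{kjl}$ between the uniform $1/N$ measures on $V^+_{t-1}$ and $V^+_t$, in which facility identities disappear. The uncrossing argument applied to $F$ shows that the identity matching $F^t_{Y_j^{t-1}Y_j^t}=1/N$ is optimal for the relaxation, which yields the claimed value, and feasibility of the block solution for the per-facility LP closes the other direction. To repair your per-facility route you would need to either prove Observation~\ref{o:2} in the form ``some optimal solution has the block structure, and all optimal solutions have the same total moving cost given the $c_j^t$,'' or switch to the aggregated flow as the paper does.
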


\noindent Observations~\ref{o:1}, and Lemma~\ref{l:switching_cost} 
(Observation~\ref{o:2}) are the key points in proving Lemma~\ref{l:first}.

\begin{definition}\label{d:Sol_p}
Let $Sol_p$ the integral solution that places at stage $t$ the $m$-th facility at 
the $(m-1)N+p$ node of $V_t^+$ i.e. $Y_{(m-1)N+p}^t$.
\end{definition}
\noindent Notice that the integral solution $Sol$ referred in Lemma~\ref{l:first} 
corresponds to $Sol_1$. The proof of Lemma~\ref{l:first} follows directly by 
Lemma~\ref{l:moving_cost} and Lemma~\ref{l:connection_cost}
that conclude this section.

\begin{lemma}\label{l:moving_cost}
Let $S_k^t$ be the moving cost of facility $k$ at stage $t$ in the optimal 
fractional solution and MovingCost$(Sol_p)$ the total moving cost of the 
facilities in the integral solution $Sol_p$. Then,

\[
\frac{1}{N}\sum_{p=1}^N MovingCost(Sol_p) = \sum_{t=1}^T\sum_{k\in F} S_k^t
\]
\end{lemma}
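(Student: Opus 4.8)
The plan is to prove Lemma~\ref{l:moving_cost} by reducing the averaged moving cost of the integral solutions $Sol_1,\ldots,Sol_N$ to the fractional moving cost, using the ``tracking'' identity from Lemma~\ref{l:switching_cost}. The starting point is the observation that, for a fixed $p$, the $m$-th facility in $Sol_p$ occupies node $Y^t_{(m-1)N+p}$ at stage $t$, so its moving cost between stages $t-1$ and $t$ is $d(Y^{t-1}_{(m-1)N+p},\,Y^t_{(m-1)N+p})$. Summing over $m\in\{1,\ldots,K\}$ and $t\in\{1,\ldots,T\}$ gives
\[
MovingCost(Sol_p) \;=\; \sum_{t=1}^T \sum_{m=1}^K d\bigl(Y^{t-1}_{(m-1)N+p},\,Y^t_{(m-1)N+p}\bigr).
\]
The key point is that as $p$ ranges over $\{1,\ldots,N\}$ and $m$ over $\{1,\ldots,K\}$, the index $(m-1)N+p$ ranges over exactly $\{1,\ldots,K\cdot N\}$, each value hit once. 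So averaging over $p$ and telescoping the double sum into a single index $j$ yields
\[
\frac{1}{N}\sum_{p=1}^N MovingCost(Sol_p) \;=\; \frac{1}{N}\sum_{t=1}^T \sum_{j=1}^{K\cdot N} d\bigl(Y^{t-1}_j,\,Y^t_j\bigr),
\]
and the right-hand side is precisely $\sum_{t=1}^T\sum_{k\in F}S_k^t$ by Lemma~\ref{l:switching_cost}. That completes the argument.

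The main subtlety I expect to handle carefully is the boundary/initial condition: for $t=1$ the term involves $Y^0_j$, i.e.\ the ordered nodes of $V_0^+$. One must check that $V_0^+$ is well-defined under Assumption~\ref{a:1} (the initial configuration $x^0$ is placed so that $c_j^0\in\{0,1/N\}$, possibly after refining $N$), and that each integral solution $Sol_p$ starts from the correct initial facility positions, so that the stage-$1$ moving cost genuinely equals $d(Y^0_{(m-1)N+p},Y^1_{(m-1)N+p})$ rather than a distance to the true starting point $x^0_m$. The natural way to handle this is to note that by Observation~\ref{o:2} the $m$-th facility fractionally occupies nodes $Y^0_{(m-1)N+1},\ldots,Y^0_{mN}$ at stage $0$, which in the integral instance all collapse to the single initial location $x^0_m$; hence for every $p$ these all coincide with $x^0_m$, and the telescoping goes through with the correct initial term. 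In other words, Lemma~\ref{l:switching_cost} already bakes in the $t=1$ term with $Y^0_j = x^0$ appropriately, so I would simply cite it in the form stated.

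The only real ``work'' in the proof is therefore the bookkeeping showing the bijection between $(p,m)\mapsto (m-1)N+p$ and $\{1,\ldots,K\cdot N\}$, followed by an application of Lemma~\ref{l:switching_cost}; no geometric or optimality argument is needed here, since all the structural content (that the fractional moving cost equals $\frac1N\sum_t\sum_j d(Y^{t-1}_j,Y^t_j)$) has been pushed into Lemma~\ref{l:switching_cost}. I would present it as a two or three line computation. The harder companion fact, that the \emph{connection} cost also averages correctly, is deferred to Lemma~\ref{l:connection_cost}, and together with Lemma~\ref{l:moving_cost} it gives that the average cost of $Sol_1,\ldots,Sol_N$ equals the fractional optimum; since each $Sol_p$ has cost at least the fractional optimum (being a feasible integral solution corresponding to a feasible point of the LP), every $Sol_p$ — in particular $Sol_1 = Sol$ — must have cost exactly equal to it, which is Lemma~\ref{l:first}.
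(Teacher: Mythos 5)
Your proposal is correct and matches the paper's proof essentially line for line: express $MovingCost(Sol_p)$ as $\sum_{t}\sum_{m} d(Y^{t-1}_{(m-1)N+p},Y^t_{(m-1)N+p})$, use the bijection $(p,m)\mapsto(m-1)N+p$ onto $\{1,\ldots,K\cdot N\}$ to collapse the double sum, and invoke Lemma~\ref{l:switching_cost}. Your extra care about the $t=1$ boundary (all copies of $Y^0_{(m-1)N+p}$ coinciding with $x^0_m$) is a reasonable addition the paper leaves implicit, but it does not change the argument.
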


\begin{proof}
By the definition of the solutions $Sol_p$ we have that:

\begin{eqnarray*}
\frac{1}{N}\sum\limits_{p=1}^N MovingCost(Sol_p) &=& 
\frac{1}{N}
\sum\limits_{p=1}^N \sum\limits_{t=1}^T \sum_{m= 1}^K d(Y^{t-1}_{(m-1)N + 
p},Y^t_{(m-1)N + p})\\
&=& 
\frac{1}{N}
\sum\limits_{t=1}^T \sum_{m=1}^K\sum\limits_{p=1}^N 
d(Y^{t-1}_{(m-1)N+p},Y^t_{(m-1)N+p})\\
&=& 
\frac{1}{N}
\sum\limits_{t=1}^T \sum_{j=1}^{K \cdot N} d(Y^{t-1}_j,Y^t_j)\\
&=&
\sum_{t=1}^T\sum_{k\in F} S_k^t
\end{eqnarray*}
The last equality comes from Lemma~\ref{l:switching_cost}.
\end{proof}

\noindent Lemma~\ref{l:moving_cost}
states that if we pick uniformly at random one of the $N$ integral solutions 
$\{Sol_p\}_{p=1}^N$, then the expected moving cost that we will pay is equal to 
the moving cost paid by the optimal fractional solution. Interestingly, the same 
holds for the expected connection cost. This is formally stated in 
Lemma~\ref{l:connection_cost} and it is where Observation~\ref{o:1} comes into 
play.

\begin{lemma}\label{l:connection_cost}
Let $ConCost^t_i(Sol_p)$ denote the connection cost of agent $i$ at stage $t$ 
in $Sol_p$. Then,
\[\frac{1}{N}\sum_{p=1}^N ConCost^t_i(Sol_p) =\sum_{j\in V} 
d(\text{Loc}(i,t),j)x_{ij}^t
\]
\end{lemma}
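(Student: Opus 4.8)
The plan is to prove Lemma~\ref{l:connection_cost} by combining Observation~\ref{o:1} with the explicit description of the solutions $Sol_p$ from Definition~\ref{d:Sol_p}. Fix an agent $i$ and a stage $t$. By Observation~\ref{o:1}, the fractional connection variables $x_{i\cdot}^t$ place mass exactly $1/N$ on each of $N$ consecutive nodes $\{Y_\ell^t,\ldots,Y_{\ell+N-1}^t\}$ of $V_t^+$, and the fractional connection cost of agent $i$ at stage $t$ is $\frac1N\sum_{h=\ell}^{\ell+N-1} d(\text{Loc}(i,t),Y_h^t)$. So it suffices to show that
\[
\frac1N\sum_{p=1}^N ConCost^t_i(Sol_p) = \frac1N\sum_{h=\ell}^{\ell+N-1} d(\text{Loc}(i,t),Y_h^t).
\]

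Next I would analyze $ConCost^t_i(Sol_p)$ directly. In the integral solution $Sol_p$, the $m$-th facility sits at node $Y_{(m-1)N+p}^t$, so the set of occupied nodes is $\{Y_p^t, Y_{N+p}^t, \ldots, Y_{(K-1)N+p}^t\}$ — that is, one node from each ``block'' $B_m = \{Y_{(m-1)N+1}^t,\ldots,Y_{mN}^t\}$, always the $p$-th node within its block. Agent $i$ connects to the nearest occupied node, so $ConCost^t_i(Sol_p) = \min_{1\le m\le K} d(\text{Loc}(i,t), Y_{(m-1)N+p}^t)$. Now, as $p$ ranges over $1,\ldots,N$, the full set of candidate positions $\{Y_{(m-1)N+p}^t : 1\le m\le K,\ 1\le p\le N\}$ sweeps out all of $V_t^+$, and for each fixed $p$ we take the agent's distance to the closest of the $K$ facilities in $Sol_p$. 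The key claim is that summing these minima over $p$ exactly recovers $\sum_{h=\ell}^{\ell+N-1} d(\text{Loc}(i,t),Y_h^t)$, i.e.\ the sum over the $N$ consecutive nodes that carry the agent's fractional mass.

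The crux — and the main obstacle — is establishing that the window $\{Y_\ell^t,\ldots,Y_{\ell+N-1}^t\}$ of Observation~\ref{o:1} has a specific alignment relative to the blocks $B_1,\ldots,B_K$: namely, I claim this window is a ``shifted block'', consisting of the last $N-p_0+1$ nodes of some block $B_{m_0}$ together with the first $p_0-1$ nodes of the next block $B_{m_0+1}$ (or is exactly one block), for some offset $p_0$. Granting this, for each $p$ the minimizer in $\min_m d(\text{Loc}(i,t), Y_{(m-1)N+p}^t)$ is attained at $m=m_0$ when $p\ge p_0$ and at $m=m_0+1$ when $p<p_0$, because $Y_{(m_0-1)N+p}^t$ (resp.\ $Y_{m_0N+p}^t$) is the node of $V_t^+$ nearest to $\text{Loc}(i,t)$ among the $K$ candidates — this is where I would invoke the exchange/optimality reasoning behind Observation~\ref{o:1} again, applied to the integral positions, using that the agent's fractional mass is supported precisely on the $N$ nodes straddling $\text{Loc}(i,t)$ optimally. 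Then the map $p\mapsto$ (the chosen node) is a bijection from $\{1,\ldots,N\}$ onto $\{Y_\ell^t,\ldots,Y_{\ell+N-1}^t\}$, so $\sum_{p=1}^N ConCost_i^t(Sol_p) = \sum_{h=\ell}^{\ell+N-1} d(\text{Loc}(i,t),Y_h^t)$, and dividing by $N$ finishes the proof.

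To make the block-alignment claim rigorous I would argue as follows: by Observation~\ref{o:1} applied not just to agent $i$ but to the structure of $V_t^+$ (or directly via Observation~\ref{o:2}, which says each facility $m$ occupies exactly block $B_m$ fractionally), the consecutive window supporting any agent must be an interval of $V_t^+$ of length $N$; and since every such window of length $N$ in an ordered list of $KN$ nodes meets at most two consecutive blocks $B_{m_0}, B_{m_0+1}$ and does so in a tail-then-head pattern, the shifted-block structure follows immediately. I expect the only genuinely delicate point to be verifying that the nearest-facility choice in $Sol_p$ is consistent with this window for every $p$ simultaneously — equivalently, that $\text{Loc}(i,t)$ lies ``between'' block $B_{m_0}$ and block $B_{m_0+1}$ in the sense forced by the exchange argument — but this is exactly the content already extracted in the proof of Observation~\ref{o:1} (the cases $\text{Loc}(i,t)\le Y_\ell^t$ and $\text{Loc}(i,t)\ge Y_\ell^t$), so it can be quoted rather than re-derived.
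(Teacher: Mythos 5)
Your proposal is correct and follows essentially the same route as the paper's proof: identify via Observation~\ref{o:1} that agent $i$'s fractional mass sits on $N$ consecutive nodes of $V_t^+$ (which, by optimality, are the $N$ closest such nodes to $\text{Loc}(i,t)$), note that each $Sol_p$ places exactly one facility inside this window and that this facility is the one agent $i$ connects to, and observe that as $p$ ranges over $1,\ldots,N$ these facility nodes sweep the window bijectively, so averaging over $p$ recovers the fractional cost. The only cosmetic difference is that you phrase the alignment step in terms of a ``shifted block'' straddling at most two consecutive blocks, whereas the paper argues more directly that the unique occupied node in the window of the $N$ closest nodes must be the nearest occupied node overall.
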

As already mentioned, the proof of Lemma~\ref{l:connection_cost} crucially 
makes use of
Observation~\ref{o:1} and is presented in the subsection~\ref{app:1} of the 
Appendix.
Combining Lemma~\ref{l:moving_cost}
and~\ref{l:connection_cost} we get 
that if we pick an integral solution $Sol_p$ uniformly at random, the average 
total cost that we pay is $Z_{LP}^*$, where $Z_{LP}^*$ is the optimal fractional 
cost. More precisely,

\begin{eqnarray*}
\frac{1}{N} \sum_{p=1}^N Cost(Sol_p) &=&
\frac{1}{N} \sum_{p=1}^N [ MovingCost(Sol_p) + \sum_{t=1}^T\sum_{i \in C} 
ConCost_i^t(Solp)]\\
&=&  \sum_{t=1}^T[ \sum_{k=1}^K S_k^t + \sum_{i\in C}\sum_{j\in V} 
d(\text{Loc}(i,t),j)x_{ij}^t]\\
&=& Z_{LP}^*
\end{eqnarray*}
Since $Sol_p \geq Z_{LP}^*$, we have that 
$Sol_1 = \dots = Sol_N = Z_{LP}^*$ and this proves Lemma~\ref{l:first}.

\subsection{Rounding the General Case} \label{s:rounding_general}
In this section, we use Lemma~\ref{l:first} to prove 
Theorem~\ref{l:main_stageing}. As already discussed, Assumption~\ref{a:1} is 
not satisfied in general by the fractional solution of the linear program~(\ref{LP}).
Each $S_{k j \ell}^t$ will be either $0$ or $A_{k j \ell}^t/N_{k j \ell}^t$, for positive 
some integers $A_{k j \ell}^t, N_{k j \ell}^t$. However, each positive $f_{kj}^t$ 
will have the form $B_{k j}^t/N$,  where $N = \Pi_{S_{kj\ell}^t > 0} N_{k j \ell}^t$. 
This is due to the constraint $f_{k j}^t = \sum_{j \in V}S_{k j \ell}^t$.

Consider the path $P' = (V',E')$ constructed from
path $P= (V,E)$ as follows: Each node $j \in V$
is split into $K N$ \emph{copies} $\{j_1,\ldots, j_{KN}\}$ with zero distance 
between them. Consider also the LP~(\ref{LP}), when the underlying path is $P' = 
(V',E')$ and at each stage $t$, each agent $i$ is located
to a node of $V'$ that is a \emph{copy} of $i$'s original location, $Loc'(i,t) = \ell 
\in V'$, where $\ell \in \text{Copies}( \text{Loc}(i,t))$.
Although these are two different LP's, they are closely related since a solution 
for the one can be converted to a solution for the other with the exact same 
cost. This is due to the fact that
for all $j,h \in V$, $d(j,h) = d(j',h')$, where $j'\in \text{Copies}(j)$ and $h'\in 
\text{Copies}(h)$.

The reason that we defined $P'$ and the second LP is the following: Given an 
optimal fractional solution of the LP defined for $P$, we will construct a 
fractional solution for the LP defined for $P'$ with the exact same cost, which 
additionally satisfies Assumption~\ref{a:1}. Then, using Lemma~\ref{l:first} we 
can obtain an integral solution for $P'$ with the same cost. This integral solution 
for $P'$ can be easily converted to an integral solution for $P$. We finally show 
that these steps are done \emph{all at once }by the rounding scheme of 
Algorithm~\ref{alg:offline} and this concludes the proof of 
Theorem~\ref{l:main_stageing}.

Given the fractional positions $\{f_{k j}^t\}_{t \geq 1}$ of the optimal solution of 
the LP formulated for $P=(V,E)$, we construct the fractional positions
of the facilities in $P' = (V',E')$ as follows: 
If $f_{k j}^t = B_{k j}^t / N$, then 
facility $k$ puts a $1/N$ amount of facility in $B_{k j}^t$ nodes of the set 
$\text{Copies}(j)=\{j_1,\ldots, j_{K N}\}$
that have a $0$ amount of facility. The latter is possible since there are exactly 
$K N$ \emph{copies} of each $j \in V$ and $c_j^t \leq K$ (that is the reason we 
required $KN$ copies of each node). The values of the rest of the variables are 
defined in the proof of Lemma~\ref{l:construction} that is presented in the end of 
the section.
The key point is that the produced solution $\{f_{k \ell}^{'t},c_{j}^{'t}
,S_{k j \ell}^{'t}, x_{i j}^{'t},S_k^{'t}\}$
will satisfy the following properties (see Lemma~\ref{l:construction}):

\begin{itemize}
    \item its cost equals $Z_{LP}^*$

    \item $f^{'t}_{k\ell} = 1/N$ or $0$, for each $\ell \in V'$

    \item $c^{'t}_{\ell} = 1/N$ or $0,$ for each $\ell \in V'$
    
    \item $c^t_j = \sum\limits_{\ell \in \text{Copies}(j)}c^{'t}_{\ell}$,  for each $j \in 
    V$
\end{itemize}

\noindent Clearly, this solution satisfies Assumption~\ref{a:1} and thus 
Lemma~\ref{l:first} can be applied. This implies that the integral solution for $P'$
that places the $m$-th facility to the $(m-1)N + 1$ node of $V^{'+}_t$ 
($Y^{'t}_{(m-1)N + 1} \in V'$) has cost $Z_{LP}^*$. So the integral solution for 
$P$ that
places the $m$-th facility to the node $j_m^t \in V$, such that
$Y^{'t}_{(m-1)N + 1} \in \text{Copies}(j_m^t)$, has again cost $Z_{LP}^*$. 

A naive way to determine the nodes $j_m^t$ is to calculate $N$, construct $P'$ 
and its fractional solution, find the nodes $Y^{'t}_{(m-1)N + 1}$ and determine 
the nodes $j_m^t$ of $P$. Obviously, this rounding scheme requires exponential 
time. Fortunately, Lemma~\ref{l:fl} provides a linear time rounding scheme to 
determine the node $j_m^t$ given the optimal fractional solution of $P = (V,E)$. 
This concludes the proof of Theorem~\ref{l:main_stageing}.

\begin{lemma}\label{l:fl}
The $(m-1) N + 1$ node of $V_t'^+$ is a \emph{copy} of the node $j_m^t \in V$ 
if and only if
\[\sum_{\ell = 1}^{j_m^t-1} c_\ell^t \leq m-1 < \sum_{\ell = 1}^{j_m^t} c_\ell^t \]
.\end{lemma}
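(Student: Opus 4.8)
The plan is to connect the node ordering on $P$ with the node ordering on the refined path $P'$ by tracking cumulative amounts of facility. Recall that each node $j \in V$ is split into $KN$ zero-distance copies in $P'$, and the construction of the fractional solution on $P'$ redistributes the mass $c_j^t$ among these copies so that each copy carries either $0$ or $1/N$. Consequently, the number of copies of $j$ that lie in $V_t'^+$ is exactly $N c_j^t$ (this is an integer, since $c_j^t = B_j^t/N$ for an integer $B_j^t$ after the refinement). The first idea is therefore to express the index of a node $\ell \in V_t'^+$ in the left-to-right ordering $Y_1'^t, \ldots, Y_{KN}'^t$ purely in terms of the cumulative sums $\sum_{\ell=1}^{j} c_\ell^t$: if $\ell$ is a copy of node $j \in V$, then the number of elements of $V_t'^+$ strictly to the left of all copies of $j$ equals $\sum_{h=1}^{j-1} N c_h^t = N \sum_{h=1}^{j-1} c_h^t$, and the copies of $j$ in $V_t'^+$ occupy the positions from $N\sum_{h=1}^{j-1}c_h^t + 1$ through $N\sum_{h=1}^{j}c_h^t$ in that ordering.

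Next I would plug in the specific index of interest. We want to identify which node $j_m^t \in V$ has the property that $Y_{(m-1)N+1}'^t$ is one of its copies. By the displayed range above, $Y_{(m-1)N+1}'^t \in \text{Copies}(j)$ exactly when
\[
N\sum_{h=1}^{j-1} c_h^t + 1 \;\le\; (m-1)N + 1 \;\le\; N\sum_{h=1}^{j} c_h^t,
\]
which after subtracting $1$ and dividing by $N$ becomes
\[
\sum_{h=1}^{j-1} c_h^t \;\le\; m-1 \;\le\; \sum_{h=1}^{j} c_h^t.
\]
To get the strict inequality on the right as stated in the lemma, I would argue that when $m-1$ equals the cumulative sum $\sum_{h=1}^{j}c_h^t$ exactly, the index $(m-1)N+1$ falls strictly beyond the last copy of $j$ in $V_t'^+$ (which sits at position $N\sum_{h=1}^{j}c_h^t = (m-1)N$), hence $Y_{(m-1)N+1}'^t$ is in fact a copy of the \emph{next} node with positive mass, not of $j$; so the correct $j_m^t$ is characterized by the half-open condition $\sum_{h=1}^{j_m^t-1} c_h^t \le m-1 < \sum_{h=1}^{j_m^t} c_h^t$. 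Uniqueness of $j_m^t$ satisfying this is immediate since the partial sums are nondecreasing and the total is $\sum_{h} c_h^t = K \ge m$ for $m \le K$ (so such a $j$ always exists and the intervals $[\sum_{h=1}^{j-1}c_h^t, \sum_{h=1}^{j}c_h^t)$ partition $[0,K)$).

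The main obstacle I anticipate is purely bookkeeping: being careful that $N c_j^t$ is genuinely an integer for the refined solution (this is exactly the content of the third and fourth bullet points guaranteed by Lemma~\ref{l:construction}, namely $c_\ell'^t \in \{0, 1/N\}$ and $c_j^t = \sum_{\ell \in \text{Copies}(j)} c_\ell'^t$), and handling the boundary case cleanly so that the inequality comes out half-open in the right direction rather than closed. There is no deep difficulty here once the correspondence between $V_t'^+$-indices and cumulative $c^t$-sums is set up; the lemma is really just the statement that the greedy "fill facilities left to right" rule on $P'$ can be read off directly from the marginals $c_j^t$ on $P$ without ever materializing $P'$.
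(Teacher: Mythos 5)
Your proposal is correct and follows essentially the same route as the paper: both arguments identify the left-to-right index of a node in $V_t'^+$ with the cumulative count of positive-mass copies, using exactly the properties $c_j^t = \sum_{\ell' \in \text{Copies}(j)} c_{\ell'}'^t$ and $c_{\ell'}'^t \in \{0,1/N\}$ from Lemma~\ref{l:construction}, and then read off the half-open condition on $\sum_{\ell \le j} c_\ell^t$. Your position-range formulation even handles the boundary case slightly more cleanly than the paper's displayed equalities.
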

\begin{proof}

Let $(m-1) N + 1$ node of $V_t'^+$ be a \emph{copy} of the node $j_m^t \in 
V_t^+$.
Then 
\[\sum_{\ell = 1} ^{j_m^t -1}c_\ell ^t = \sum_{\ell = 1} ^{j_m^t -1} \sum_{\ell' \in 
\text{Copies}(\ell)} c_{\ell'}^{' \ t} 
\leq (m-1)N \frac{1}{N} = m-1\]

\[\sum_{\ell = 1} ^{j_m^t }c_\ell ^t =\sum_{\ell = 1} ^{j_m^t} \sum_{\ell' \in 
\text{Copies}(\ell)} c_{\ell'}^{' \ t} 
= ((m-1)N + 1) \frac{1}{N} > m-1
\]
\noindent The above equations hold because of the property $c_\ell^t = 
\sum_{\ell' \in \text{Copies}(\ell)}c_{\ell'}^{'t}$ and that $c_{\ell'}^{'t}$ is either $0$ 
or $1/N$.

\noindent Now, let $\sum_{\ell = 1}^{j_m^t-1} c_\ell^t \leq m-1 < \sum_{\ell = 
1}^{j_m^t} c_\ell^t$ and assume that $(m-1)N + 1$-th node of $V_t^+$ is a copy 
of $j \in V$.
If $j < j_m^t$, then $\sum_{\ell = 1}^{j}c_\ell^t> m-1$ and if $j > j_m^t$, then 
$\sum_{\ell = 1}^{j_m^t}< m-1$. As a result, $j = j_m^t$. 
\end{proof}

\begin{remark}
We remark that the nodes $j_m^t$ can be determined with an even simpler way 
than that presented in Algorithm~\ref{alg:offline}. That is 
$j_m^t$ is the most left node such that
$f_{m j}^t > 0$.
However, this rounding strategy requires some additional analysis.
\end{remark}

\begin{lemma}\label{l:construction}
Let $\{f_{kj}^t, c_j^t, S_{kjl}^t,x_{ij}^t\}_{t \geq 1}$ the optimal fractional solution 
for the LP~$(1)$ with underlying path $P$.
Then, there exists a solution $\{f_{k j}^{'t}, c_j^{'t}, S_{k j l}^{'t},x_{ij}^{'t},S_k^{'t} 
\}_{t \geq 1}$
of the LP~$(1)$ with underlying path $P'$ such that
\begin{enumerate}
    \item Its cost is $Z_{LP}^*$.

    \item $f^{'t}_{k\ell} = 1/N$ or $0$,  for each $\ell \in V'$

    \item $c^{'t}_{\ell} = 1/N$ or $0$,  for each $\ell \in V'$
    
    \item $c^t_j = \sum\limits_{\ell \in \text{Copies}(j)}c^{'t}_{\ell}$,  for each $j \in 
    V$
\end{enumerate}
\end{lemma}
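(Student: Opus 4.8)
My plan is to build the primed solution one block of variables at a time --- first the facility positions $f_{k\ell}^{'t}$ together with the coverages $c_\ell^{'t}$, then the assignments $x_{i\ell}^{'t}$, and finally the moving plan $S_{kj\ell}^{'t}$ --- and then to verify that every constraint of LP~(\ref{LP}) over $P'$ holds and that the translation neither creates nor destroys cost. Throughout I fix $N=\prod_{S_{kj\ell}^t>0}N_{kj\ell}^t$; as noted just before the statement, the constraint $f_{kj}^t=\sum_\ell S_{kj\ell}^t$ forces every positive $f_{kj}^t$, hence every positive $c_j^t=\sum_k f_{kj}^t$, to be an integer multiple of $1/N$, and likewise $N S_{kj\ell}^t\in\mathbb{Z}_{\ge 0}$.

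I would handle the facilities first, processing $k=1,\ldots,K$ in order: for each stage $t$ and each node $j$ with $f_{kj}^t=B_{kj}^t/N>0$, choose $B_{kj}^t$ copies in $\text{Copies}(j)$ not yet used by facilities $1,\ldots,k-1$ at stage $t$, set $f_{k\ell}^{'t}=1/N$ on them and $0$ on the remaining copies of $j$. This is possible because $\sum_k B_{kj}^t=N c_j^t\le NK=|\text{Copies}(j)|$, using $\sum_j f_{kj}^t=1$ and hence $c_j^t\le K$. Setting $c_\ell^{'t}:=\sum_k f_{k\ell}^{'t}$, each copy is used by at most one facility, so $c_\ell^{'t}\in\{0,1/N\}$, and exactly $N c_j^t$ copies of $j$ are used, so $\sum_{\ell\in\text{Copies}(j)}c_\ell^{'t}=c_j^t$; this gives properties~2, 3 and 4, and the constraints $\sum_\ell f_{k\ell}^{'t}=1$ and $c_\ell^{'t}=\sum_k f_{k\ell}^{'t}$ hold by construction. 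For the assignments I would, for each agent $i$, stage $t$ and node $j$ with $x_{ij}^t>0$, spread $x_{ij}^t$ evenly over the $N c_j^t$ used copies of $j$, i.e.\ $x_{i\ell}^{'t}:=x_{ij}^t/(N c_j^t)$ there and $0$ elsewhere; then $x_{i\ell}^{'t}\le c_j^t/(N c_j^t)=1/N=c_\ell^{'t}$ and $\sum_{\ell\in\text{Copies}(j)}x_{i\ell}^{'t}=x_{ij}^t$, so $\sum_\ell x_{i\ell}^{'t}=1$. Since all copies of a node occupy the same point and inter-node distances are inherited from $P$, $d(\text{Loc}'(i,t),\ell)=d(\text{Loc}(i,t),j)$ for $\ell\in\text{Copies}(j)$, so every agent's connection cost is unchanged.

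The remaining, and most delicate, step is routing the facilities. Fix $k$ and $t$: the matrix $(N S_{kj\ell}^t)_{j,\ell}$ has nonnegative integer entries with row sums $N f_{kj}^{t-1}$ and column sums $N f_{k\ell}^t$, i.e.\ it is an integral transportation plan between the $1/N$-atoms of $f_{k\cdot}^{t-1}$ and those of $f_{k\cdot}^t$. By construction facility $k$ uses exactly $N f_{kj}^{t-1}$ copies of $j$ at stage $t-1$ and $N f_{k\ell}^t$ copies of $\ell$ at stage $t$, so those row/column sums are precisely the supplies and demands available among the chosen copies; hence I can pick a bijection $\sigma$ from the copies used at stage $t-1$ onto those used at stage $t$ that sends exactly $N S_{kj\ell}^t$ of the copies of $j$ into copies of $\ell$, for every pair $(j,\ell)$. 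Set $S_{kab}^{'t}=1/N$ if $b=\sigma(a)$ and $0$ otherwise. Then $\sum_b S_{kab}^{'t}=f_{ka}^{'(t-1)}$ and $\sum_a S_{kab}^{'t}=f_{kb}^{'t}$ (both sides vanish on unused copies), so the last two constraints of LP~(\ref{LP}) hold, and $S_k^{'t}=\sum_{a,b}d(a,b)S_{kab}^{'t}=\tfrac1N\sum_{j,\ell}(N S_{kj\ell}^t)\,d(j,\ell)=S_k^t$, since $d(a,\sigma(a))=d(j,\ell)$ whenever $a\in\text{Copies}(j)$ and $\sigma(a)\in\text{Copies}(\ell)$. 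For the transition $0\to 1$ I would first split each facility's initial mass over $N$ copies of its starting node, geometrically the same point, so that $f^{'0}$ also satisfies Assumption~\ref{a:1}. Putting the three parts together, all constraints of LP~(\ref{LP}) over $P'$ are met and the total cost equals $\sum_t[\sum_i\sum_j d(\text{Loc}(i,t),j)x_{ij}^t+\sum_k S_k^t]=Z_{LP}^*$, which is property~1.

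The hard part will be exactly this moving plan: once every $1/N$-atom of facility has been pinned to a specific copy in the step where the positions were fixed, one must still be able to route those atoms to the atoms of the next stage at the original cost. What rescues the argument is that only the per-node counts of used copies enter, and those counts are precisely the integer row and column sums of $N S_{kj\ell}^t$, so any bijection realizing that integral transportation matrix works and no coordination between the copy choices at consecutive stages is required. The rest --- the stage-$0$ bookkeeping and the routine constraint checks --- is straightforward.
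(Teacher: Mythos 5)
Your proof is correct and follows essentially the same route as the paper's: split each $1/N$-atom of $f_{kj}^t$ onto distinct copies of $j$, realize the integral transportation matrix $(N S_{kj\ell}^t)$ by pairing up the used copies at consecutive stages, and distribute the assignments $x_{ij}^t$ over the used copies, with all costs preserved because copies are geometrically coincident. The only (harmless) deviations are that you spread $x_{ij}^t$ uniformly over the $Nc_j^t$ used copies instead of as $1/N$-atoms, and that you are somewhat more explicit than the paper about processing facilities in order so no copy is used twice and about the stage-$0$ bookkeeping.
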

\begin{proof}
 First, we set values to the variables $f_{kj}^{'t}$.
Initially, all $f_{kj}^{'t} = 0$.
We know that if $f_{kj}^t>0$, then it equals $B_{kj}^t/N$,
for some positive integer $B_{kj}^t$. For each
such $f_{kj}^t$, we find $u_1,\ldots,
u_{B_{kj}^t} \in \text{Copies}(j)$ with $f_{ku_h}^{'t} = 0$. Then, we set 
$f_{k u_h}^{'t} = 1/N$ for $h = \{1,B_{kj}^t\}$. Since there are $KN$ copies of 
each node $j \in V$ and $\sum_{j \in V}f_{kj}^t \leq K$, we can always find 
sufficient copies of
$j$ with $f_{ku}^{'t} = 0$. When this step is terminated, we are sure that 
conditions $2,3,4$ are satisfied. 

We continue with the variables $S_{kj\ell}^{'t}$.
Initially, all $S_{kj\ell}^{'t} = 0$. Then, each positive
$S_{kj\ell}^{t}$ has the form $B_{kj\ell}^{t}/N$.
Let $B = B_{kj\ell}^{t}$ to simplify notation.
We now find $B$ copies of $u_1,\ldots, u_{B}$ of $j$ and $v_1,\ldots, v_B$ of 
$\ell$ so that

\begin{itemize}
    \item $f_{ku_1}^{' t} = \dots = f_{ku_{B}}^{'t} = f_{kv_1}^{'t} = \dots = 
    f_{kv_{B}}^{'t} = 1/N$

 \item $S_{ku_1 h }^{'t} = \dots = S_{k u_{B} h}^{'t} = S_{k h v_1 }^{'t} = \dots = 
 S_{k h v_{B}}^{'t} = 0$
for all $h \in V'$

\end{itemize}
We then set 
$S_{ku_1 v_1 }^{'t} = \dots = S_{ku_B v_B }^{'t} = 1/N$. Again, since $\sum 
_{\ell \in V}S_{kj \ell }^{t} = f_{kj}^t$ and $\sum _{j \in V}S_{kj \ell }^{t} = 
f_{k\ell}^t$ we can always find $B_{kj\ell}^t$ pairs of copies of $j$ and $\ell$ that 
satisfy the above requirements. We can now prove that the movement cost
of each facility $k$ is the same in both solutions.
\begin{eqnarray*}
\sum_{j \in V}\sum_{\ell \in V}d(j,\ell)S_{kj\ell}^t &=&
\sum_{j \in V}\sum_{\ell \in V}d(j,\ell)B_{kj\ell}^t/N\\
&=&
\sum_{j \in V}\sum_{\ell \in V}\sum_{h \in \text{Copies}(j)}\sum_{ h' \in 
\text{Copies}(\ell)}S_{k h h'}^{'t}d(h,h')\\
&=&
\sum_{j' \in V'}\sum_{\ell' \in V'} S_{k j' \ell'}^{'t}d(j',\ell')
\end{eqnarray*}
\noindent The second equality follows from the fact that $h,h'$ are copies of 
$j,\ell$ respectively and thus 
$d(h,h')=d(j,\ell)$.

Finally, set values to the variables $x_{ij}^{'t}$ for each $j \in V'$. Again, each 
positive $x_{ij}^t$ equals
$B_{ij}^t/N$, for some positive integer. We take $B_{ij}^t$ copies of $j$, 
$u_1,\dots,u_{B_{ij}^t}$ and
set $x_{iu_1}^{'t} = \dots = x_{iu_{B_{ij}^t}}^{'t} = 1/N$. The connection cost of 
each agent $i$ remains the same since 

\begin{eqnarray*}
\sum_{j \in V} d(\text{Loc}(i,t),j)x_{ij}^t &=&
\sum_{j \in V} d(\text{Loc}(i,t),j)B_{ij}^t/N\\
&=&
\sum_{j \in V} d(\text{Loc}(i,t),j)\sum_{j' \in \text{Copies}(j)}x_{i j'}^{'t}\\
&=&
\sum_{j \in V} \sum_{j' \in \text{Copies}(j)}
d(\text{Loc}'(i,t),j') x_{i j'}^{'t}\\
&=&
\sum_{h \in V'} d(\text{Loc}'(i,t),h) x_{i h}^{'t}\\
\end{eqnarray*}
The third equality holds since $\text{Loc}'(i,t) \in \text{Copies}(\text{Loc}(i,t))$.
\end{proof}

\section{A Constant-Competitive Algorithm for the Online 2-Facility Reallocation 
Problem}\label{s:online}
\tikzset{cross/.style={path picture={
  \draw[black]
    (path picture bounding box.south east)--(path picture bounding
box.north west)
    (path picture bounding box.south west)--(path picture bounding
box.north east);
}}}

In this section, we present an algorithm for the online \emph{2-facility 
reallocation problem} and
we discuss the core ideas that prove its performance guarantee. The online 
algorithm, denoted as
Algorithm~\ref{alg:double_coverage}, consists of two major steps.

In Step~$1$, facilities are initially moved towards the positions of the agents.
We remark that in Step~$1$, the final positions of the facilities at stage $t$ are 
not yet determined.
The purpose of this step is to bring at least one facility close to the agents. 
This initial moving consists of three cases 
(see Figure~\ref{f:1}), depending only on the relative positions of the facilities at 
stage $t-1$ and the agents at stage $t$.

In Step~$2$, our algorithm determines the final positions of the facilities 
$x_1^t,x_2^t$.
Notice that after Step~$1$, at least one of the facilities is inside the interval
$[\alpha_1^t,\alpha_n^t]$, meaning that at least one of the facilities is close to 
the agents.
As a result, our algorithm may need to decide between moving the second 
facility close to the agents or
just letting the agents connect to the facility that is already close to them. 
Obviously, the first choice
may lead to small connection cost, but large moving cost, while the second has 
the exact opposite effect.
Roughly speaking, Algorithm~\ref{alg:double_coverage} does the following:
If the connection cost of the agents, when placing just one facility optimally, is 
not much greater
than the cost for moving the second facility inside $[\alpha_1^t,\alpha_n^t]$,
then Algorithm~\ref{alg:double_coverage} puts the first facility to the position 
that minimizes
the connection cost, if one facility is used. Otherwise,
it puts the facilities to the positions that minimize the connection cost, if two 
facilities are used.
The above cases are depicted in Figure~\ref{f:2}.
We formalize how this choice is performed, introducing some additional notation.

\begin{definition}\label{d:online}
	\mbox{}
 \begin{itemize}
  \item $C_t = \{\alpha_1^t,\ldots, \alpha_n^t\}$ denotes the positions of the 
  agents at stage $t$ \text{ordered from left to right}.

  \item If $C$ is a set of positions with $|C|=2k, k \in \mathbb{N}_{>0}$, then 
  $M_C$ denotes the median interval
  of the set, which is the interval $[\alpha_{n/2}, \alpha_{n/2+1}]$. If $|C| = 2k +1, k 
  \in \mathbb{N}_0$, then $M_C$ is a single point.
 
  \item $H(C)$ denotes the optimal connection cost for the set $C$ when all 
  agents of $C$
  connect to just one facility. That is
  $H(C) = \sum_{\alpha \in C}|\alpha - M_C|.$ We also define
  $H(\emptyset)=0$.


  \item $C_{1t}^*$ (resp. $C_{2t}^*$) denotes the positions of the agents that 
  connect to facility $1$ (resp. $2$) at stage $t$
  in the optimal solution $x^*$. $C_{1t}$ (resp. $C_{2t}$) denotes the positions of 
  the agents that connect to facility $1$ (resp. $2$) at stage $t$
  in the solution produced by Algorithm~\ref{alg:double_coverage}.
 \end{itemize}
\end{definition}

\begin{algorithm}[ht]
	\SetAlgoLined
	\DontPrintSemicolon
	\KwData{At stage $t \geq 1$ the new positions of the agents $C_t = 
		\{\alpha_1^t,\ldots,\alpha_n^t\}$,
		ordered from left to right, arrive}
	\textbf{Step 1:} Moving the facilities towards the agents
	
	\begin{algorithmic}
		\STATE $z_1 \leftarrow x_1^{t-1}$, $z_2 \leftarrow x_2^{t-1}$
		
		\STATE \If{$z_1>\alpha_n^t$}{\emph{move facility $1$ to the left until it hits 
			}$\alpha_n^t$\;
			$z_1 \leftarrow \alpha_n^t$}
		
		\STATE \If{$z_2<\alpha_1^t$}{\emph{ move facility $2$ to the right until it 
				hits }$\alpha_1^t$\;
			$z_2 \leftarrow \alpha_1^t$}
		\STATE \If{$z_1<\alpha_1^t$ \textbf{and} $z_2>\alpha_n^t$}
		{
			\emph{move facility $1$ to the right and facility $2$ to the left
				until a facility hits }$[\alpha_1^t,\alpha_n^t]$\;
			$z_1\leftarrow z_1 + \min(|x_1^{t-1}-\alpha_1^t|,|x_2^{t-1}-\alpha_n^t|)$\;
			$z_2\leftarrow z_2 - \min(|x_1^{t-1}-\alpha_1^t|,|x_2^{t-1}-\alpha_n^t|)$\;
		}
	\end{algorithmic}
	\textbf{Step 2:} Selecting the final position of the facilities
	
	\begin{algorithmic}
		\STATE \If{$\alpha_1^t\leq z_1 \leq \alpha_n^t$ \textbf{and} $z_2- 
			\alpha_n^t\geq 3H(C_t)$}{
			\emph{put facility $1$ to the median of }$C_t$
			\emph{ and move facility }$2$\emph{ to the left by } $3H(C_t)$\;
			$x_1^t\leftarrow M_{C_t}$,$~x_2^t\leftarrow z_2 - 3H(C_t)$\;
		}
		
		\STATE \If{$\alpha_1^t\leq z_2 \leq \alpha_n^t$ \textbf{and} $\alpha_1^t- 
			z_1\geq 3H(C_t)$}{
			\emph{put facility $2$ to the median of }$C_t$
			\emph{ and move facility }$1$\emph{ to the right by } $3H(C_t)$\;
			$x_1^t\leftarrow z_1 + 3H(C_t),x_2^t\leftarrow M_{C_t}$\;
		}
		\STATE     \Else{
			
			\emph{Compute the optimal partition} $(O_1,O_2)$\emph{ of }$C_t$
			\emph{ that minimizes the connection cost at stage} $t$.\;
			
			\emph{Put facility }$1$\emph{ to the median of }$O_1$\emph{ and
				facility }$2$\emph{ to the median of } $O_2$.\;
			$x_1^t\leftarrow M_{O_1} , x_2^t\leftarrow M_{O_2}$
		}
	\end{algorithmic}
	\caption{Selecting $x_1^t$ and $x_2^t$}
	\label{alg:double_coverage}
\end{algorithm}
\setlength{\belowcaptionskip}{0pt}

We first mention that Algorithm~\ref{alg:double_coverage} seems much more 
complicated than
it really is (the first two cases are symmetric both in Step~1 and Step~2). In fact,
only the last two cases are difficult to handle and we explain them subsequently.
The performance guarantee of  Algorithm~\ref{alg:double_coverage} is formally 
stated in Theorem~\ref{t:competitive}.

\begin{theorem}\label{t:competitive}
Let $x = \{x_1^t,x_2^t\}_{t\geq 1}$ the solution
produced by Algorithm~\ref{alg:double_coverage} and $x^*$ the optimal 
solution. Then,
\[Cost(x) \leq 63 \cdot Cost(x^*) + |x_1^0 - x_2^0|\]
where $x_1^0,x_2^0$ are the initial positions of the facilities.
\end{theorem}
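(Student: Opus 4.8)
The plan is to run a potential-function argument of the Tarjan–Sleator type, mimicking the analysis of the double-coverage algorithm for $K$-server. First I would fix a potential $\Phi_t$ that measures the discrepancy between the algorithm's configuration and the optimal configuration at stage $t$; the natural candidate is a combination of a matching term, something like $\Phi_t = \gamma_1\big(|x_1^t - x_1^{*t}| + |x_2^t - x_2^{*t}|\big) + \gamma_2\,|x_1^t - x_2^t|$ for constants $\gamma_1,\gamma_2$ to be tuned, where $x_1^{*t}\le x_2^{*t}$ are the optimal positions (so the algorithm's and OPT's facilities are matched in left-to-right order, which is legitimate on the line). The spreading term $\gamma_2|x_1^t-x_2^t|$ plays exactly the role it does in double coverage: it pays for the "wasted'' motion of the facility that moves toward the agents but does not end up serving them. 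I would then prove the amortized inequality
\[
\mathrm{ALG}_t + \Phi_t - \Phi_{t-1} \;\le\; c\cdot \mathrm{OPT}_t
\]
for every stage $t$, where $\mathrm{ALG}_t$ and $\mathrm{OPT}_t$ are the per-stage costs (moving plus connection) of the algorithm and of $x^*$. Summing over $t$ and using $\Phi_T\ge 0$, $\Phi_0 = \gamma_2|x_1^0-x_2^0|$ gives $Cost(x)\le c\cdot Cost(x^*) + \gamma_2|x_1^0-x_2^0|$; choosing the constants so that $c\le 63$ and $\gamma_2\le 1$ yields the claimed bound.

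The amortized inequality I would establish in two substeps corresponding to the two steps of Algorithm~\ref{alg:double_coverage}. In Step~1 (the double-coverage move), OPT incurs no cost yet, so I must show $\Delta_{\text{Step 1}}\mathrm{ALG} + \Delta_{\text{Step 1}}\Phi \le 0$; this is the standard double-coverage calculation: when only one facility moves (cases where $z_1>\alpha_n^t$ or $z_2<\alpha_1^t$), moving it toward the interval $[\alpha_1^t,\alpha_n^t]$ decreases the matching term by at least as much as the motion costs, provided the matched optimal facility lies on the far side, which it does since $x^*$ must serve the agents. When both facilities move inward by the same amount $\delta=\min(|x_1^{t-1}-\alpha_1^t|,|x_2^{t-1}-\alpha_n^t|)$, the spreading term drops by $2\gamma_2\delta$ while the matching term changes by at most $2\gamma_1\delta$, so I need $\gamma_2\ge\gamma_1+1$ (up to the factor hidden in how $\mathrm{ALG}$ is weighted) to absorb the $2\delta$ algorithm cost. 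In Step~2 I would do a case analysis over the three branches. In the first two (symmetric) branches the algorithm pays $H(C_t)$ in connection cost plus $3H(C_t)$ to shift the far facility, i.e.\ $\Theta(H(C_t))$, and I would charge this against $\mathrm{OPT}_t$: since $H(C_t)$ is the one-facility optimum, either OPT also uses essentially one facility (so $\mathrm{OPT}_t\gtrsim H(C_t)$ minus boundary-matching slack) or OPT splits the agents but then its two facilities are far apart and close to the agents, which forces a matching/spreading decrease in $\Phi$ large enough to compensate. The entry condition $z_2-\alpha_n^t\ge 3H(C_t)$ of this branch is exactly what guarantees the far facility is far enough that OPT "should'' be paying comparably; the constant $3$ is what makes the slack work out.

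The genuinely delicate branch — and the one I expect to be the main obstacle — is the \texttt{Else} branch, where the algorithm places both facilities at the medians $M_{O_1},M_{O_2}$ of the optimal two-way partition of $C_t$ and thus pays the \emph{optimal one-stage connection cost}, but may have to move a facility a long way to do so, and simultaneously the potential $\Phi$ can change unfavorably (the matching term can jump if OPT's partition differs from $(O_1,O_2)$, and the spreading term can change a lot). Handling this requires (i) relating the algorithm's two-way connection cost at stage $t$ to $\mathrm{OPT}_t$'s connection cost — these are both optimal for the same point set so they are close, but OPT may be using a different partition inherited from neighboring stages; (ii) bounding the algorithm's Step-2 motion in this branch: here I would use the \emph{negation} of the entry conditions of the first two branches (both facilities are "close-ish'' to the agents after Step 1, i.e.\ within $3H(C_t)$ of the interval, or already inside it) to conclude the Step-2 displacement is $O(H(C_t))=O(\mathrm{OPT}_t)$; and (iii) showing the net change in $\Phi$ is bounded by $O(\mathrm{OPT}_t)$ using the triangle inequality between the algorithm's new medians, OPT's positions, and the endpoints of $[\alpha_1^t,\alpha_n^t]$, together with the fact that both configurations serve $C_t$. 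Pinning down the constants in (i)–(iii) so that they, combined with the Step-1 constraint $\gamma_2\ge\gamma_1+1$ and the Step-2 branch constants, close up to a final ratio of $63$ is the bookkeeping core of the proof; I would carry the constants symbolically and optimize at the end.
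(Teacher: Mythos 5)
There is a genuine gap, and it sits exactly at the point the paper identifies as the crux of the problem. Your Step-1 (double-coverage) argument rests on the claim that the matched optimal facility ``lies on the far side, which it does since $x^*$ must serve the agents.'' That claim is false for $K$-facility reallocation: unlike the $K$-server problem, the optimal solution is never forced to place a facility inside or beyond the interval $[\alpha_1^t,\alpha_n^t]$ --- it may keep both facilities strictly to one side of all the agents and simply pay connection cost. Consequently, a potential built on the matching distance to $x^{*t}$ need not decrease when your algorithm moves a facility toward the agents, and the inequality $\Delta_{\text{Step 1}}\mathrm{ALG}+\Delta_{\text{Step 1}}\Phi\le 0$ cannot be established as you state it (nor is it even the right target, since OPT's cost here is not split into a ``move'' phase and a ``serve'' phase aligned with the algorithm's two steps). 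One could try to rescue the direct comparison by charging the failed potential decrease to OPT's connection cost, but your proposal neither notices the problem nor supplies such a mechanism.

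The paper's resolution is to interpose an auxiliary solution: $y_k^t$ is placed at the median of OPT's $k$-th service cluster $C_{kt}^*$ (or at $x_k^{*t}$ if that cluster is empty). Lemma~\ref{l:median} shows this solution's total cost is at most $3\cdot Cost(x^*)$, and --- crucially --- since every agent belongs to some nonempty cluster, at least one of $y_1^t,y_2^t$ always lies in $[\alpha_1^t,\alpha_n^t]$. The potential $\Phi_t(x_1,x_2)=2(|x_1-y_1^t|+|x_2-y_2^t|)+|x_1-x_2|$ is then taken against $y^t$ rather than $x^{*t}$, which is exactly what makes the Step-1 ``safe moves'' of Lemma~\ref{l:z1} go through; Lemma~\ref{l:z2} handles Step 2 with a case analysis yielding the per-stage constant $21$, and $21\times 3=63$. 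Your plan has the right $K$-server-style architecture and the right shape of potential (matching term plus spreading term, with $\Phi_0=|x_1^0-x_2^0|$), but without the intermediate solution $y$ the central amortized step fails, and the Step-2 case analysis --- which you correctly flag as delicate --- is left entirely open, so the proposal does not constitute a proof.
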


The rest of the section is dedicated to provide a proof sketch (some proofs are 
included in subsection~\ref{app:2} of the Appendix) of 
Theorem~\ref{t:competitive}. Although it is possible to improve the competitive 
ratio of Algorithm~\ref{alg:double_coverage} by a much more technically 
involved analysis, we stress here that it is not possible to turn the result into any 
constant factor. The reason is that the \emph{2-facility reallocation problem} on 
the line is a generalization of \emph{2-server problem} on the line, which has a 
lower bound of 2 on the competitive ratio of any online algorithm. 
Before proceeding, we present Lemma~\ref{l:median} that is a key component in 
the subsequent
analysis and that reveals the real difficulty of the online \emph{$2$-facility 
reallocation problem}.

\begin{lemma}\label{l:median}
Let the optimal solution $x^*$ and $C_{1t}^*,C_{2t}^*$
the set of agents that connect respectively to facility $1$ and $2$ at stage $t$.
Let the solution $y^t=(y_1^t,y_2^t)$ defined as follows:
$$
y_k^t= \left\{
\begin{array}{ll}
      M_{C_{kt}^*} & \text{ if  } C_{kt}^* \neq \emptyset \\
      x_k^{*t} & \text{ if  } C_{kt}^* = \emptyset\\
\end{array}
\right.
$$
Then, the following inequality holds.
\[ \sum_{t=1}^T \bigg[ \sum_{k=1}^2  [ H(C_{kt}^*) + |y_k^t - y_k^{t-1}| ] \bigg] 
\leq 3\cdot Cost(x^*)\]
\end{lemma}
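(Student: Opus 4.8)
The plan is to compare the "median solution" $y$ to the optimal solution $x^*$ term by term, bounding both the connection cost of $y$ (which is $\sum_t \sum_k H(C_{kt}^*)$ by definition of $y_k^t = M_{C_{kt}^*}$) and the moving cost $\sum_t \sum_k |y_k^t - y_k^{t-1}|$ against the total cost of $x^*$. The connection part is handled immediately: since $M_{C_{kt}^*}$ is by definition the point minimizing $\sum_{\alpha \in C_{kt}^*} |\alpha - \cdot|$, the connection cost of $y$ at stage $t$ is exactly $\sum_k H(C_{kt}^*)$, and this is at most the connection cost of $x^*$ at stage $t$, because $x^*$ serves the same partition $(C_{1t}^*, C_{2t}^*)$ with facilities placed possibly suboptimally within each group. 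So it remains to bound the moving cost of $y$.

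For the moving cost, the main work is to show $\sum_t \sum_k |y_k^t - y_k^{t-1}| \le 2\,Cost(x^*)$, which combined with the connection bound gives the claimed factor $3$. The key estimate is a single-stage bound: $|y_k^t - y_k^{t-1}| \le |x_k^{*t} - x_k^{*(t-1)}| + (\text{connection terms at stages } t-1 \text{ and } t)$. The idea is that $M_{C_{kt}^*}$ lies within the "spread" of the agents in $C_{kt}^*$, so it cannot be far from $x_k^{*t}$: specifically $|y_k^t - x_k^{*t}| \le H(C_{kt}^*)$ when $C_{kt}^*\neq\emptyset$ — indeed, the median is within distance $\sum_{\alpha}|\alpha - M|$ of any point that the agents' own facility $x_k^{*t}$ occupies, since at least one agent lies on the far side. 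Hence by the triangle inequality $|y_k^t - y_k^{t-1}| \le |y_k^t - x_k^{*t}| + |x_k^{*t} - x_k^{*(t-1)}| + |x_k^{*(t-1)} - y_k^{t-1}| \le |x_k^{*t} - x_k^{*(t-1)}| + H(C_{kt}^*) + H(C_{k(t-1)}^*)$. Summing over $t$ and $k$, the optimal-movement terms contribute at most $MovingCost(x^*)$, and each $H(C_{kt}^*)$ appears at most twice (once as the stage-$t$ term, once as the stage-$(t-1)$ term of the next stage), contributing at most $2\sum_t\sum_k H(C_{kt}^*) \le 2\,ConnectionCost(x^*)$. Altogether the moving cost of $y$ is at most $MovingCost(x^*) + 2\,ConnectionCost(x^*) \le 2\,Cost(x^*)$, and adding the connection cost of $y$ (at most $ConnectionCost(x^*) \le Cost(x^*)$) yields the bound $3\,Cost(x^*)$.

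The one subtlety requiring care is the case where $C_{kt}^* = \emptyset$ for some stages: then $y_k^t = x_k^{*t}$, so the corresponding single-stage movement is either handled as above (when the neighboring stage also has an empty set, it is exactly $|x_k^{*t} - x_k^{*(t-1)}|$) or mixed (one endpoint is a median, the other is $x^*$'s position), and in the mixed case the bound $|y_k^t - x_k^{*t}| \le H(C_{kt}^*)$ still applies at whichever endpoint has a nonempty agent set, while the other contributes $0$. The boundary term at $t=0$ also needs attention: $y_k^0$ should be taken as $x_k^{*0} = x_k^0$, so no spurious initial movement is introduced.

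The main obstacle I expect is making the per-stage bound $|y_k^t - x_k^{*t}| \le H(C_{kt}^*)$ fully rigorous for even-sized $C_{kt}^*$, where $M_{C_{kt}^*}$ is an interval and $y_k^t$ could be taken as any point of it — one must argue that $x_k^{*t}$, which connects all of $C_{kt}^*$, cannot be farther from this interval than $H(C_{kt}^*)$, using that moving from $x_k^{*t}$ to the near end of the median interval does not increase the distance to a majority of the agents. Beyond that, the argument is a careful bookkeeping of which $H$-terms get charged where, exactly as sketched.
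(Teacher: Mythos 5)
Your overall architecture matches the paper's proof: the connection cost of $y$ is at most that of $x^*$ because the median minimizes the one-facility cost over each cluster $C_{kt}^*$, and the movement of $y$ is bounded by the triangle inequality through $x_k^{*t}$ and $x_k^{*(t-1)}$, charging the displacements $|y_k^t-x_k^{*t}|$ to connection costs. However, the key per-stage estimate you rely on, $|y_k^t - x_k^{*t}| \le H(C_{kt}^*)$, is false --- and it is precisely the step you flagged as the "main obstacle." The optimal solution has no reason to keep $x_k^{*t}$ near the agents it serves; it may leave a facility far from its cluster to save moving cost. Concretely, take one facility starting at $0$, a single agent at position $10$ at stage $1$ and at position $0$ at stage $2$: the unique optimum keeps the facility at $0$ throughout (cost $10$), so at stage $1$ the cluster is $\{10\}$, $H(\{10\})=0$, $y^1=10$, and $|y^1-x^{*1}|=10>0=H(\{10\})$. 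The "at least one agent lies on the far side" argument fails because the facility can sit outside the convex hull of its own cluster, at a distance from the median that is controlled only by the connection cost $x^*$ actually pays, not by the (possibly much smaller, even zero) optimal one-median cost $H$.

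The repair is to charge $|y_k^t-x_k^{*t}|$ to the connection cost of $x^*$ itself rather than to $H(C_{kt}^*)$: since $y_k^t$ lies in the median interval, whose endpoints are positions of agents connected to $x_k^{*t}$, one has $|y_k^t - x_k^{*t}| \le \max_{a\in C_{kt}^*}|x_k^{*t}-a| \le \sum_{a\in C_{kt}^*}|x_k^{*t}-a|$; this also settles your worry about even-sized clusters. With this substitution your bookkeeping goes through unchanged --- each stage's connection cost of $x^*$ is charged twice by the movement bound and once by the connection bound of $y$ (using $H(C_{kt}^*)\le \sum_{a\in C_{kt}^*}|x_k^{*t}-a|$), giving a total of $3\cdot\mathrm{ConnCost}(x^*)+\mathrm{MovingCost}(x^*)\le 3\cdot Cost(x^*)$ --- and the resulting argument is exactly the paper's proof.
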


\begin{proof}
Since $\sum_{t=1}^T \sum_{k=1}^2  H(C_{kt}^*)=
\sum_{t=1}^T \sum_{k=1}^2 \sum_{a \in C_{kt}^*} |x_k^{*t} - a|,$
we only have to prove that $$ 
\sum_{t=1}^T \sum_{k=1}^2  |y_k^t - y_k^{t-1}| \leq 2\sum_{t=1}^T \sum_{k=1}^2 
\left[H(C_{kt}^*)  +|x_k^{*t} - x_k^{*t-1}|\right]$$
From the triangle inequality, we have that
$$\sum_{t=1}^T \sum_{k=1}^2  |y_k^t - y_k^{t-1}|\leq \sum_{t=1}^T 
\sum_{k=1}^2 [
|y_k^t - x_k^{*t}|+
|y_k^{t-1} - x_k^{*t-1}|+|x_k^{*t} - x_k^{*t-1}|]$$
The right hand side of the inequality is maximized, when $y_k^t\neq x_k^{*t}$ 
and $y_k^{t-1}\neq x_k^{*t-1}$ for $k=1,2$, namely when $C_{kt}^*, 
C_{k(t-1)}^*\neq \emptyset$. Since $y_k^t$ (resp. $y_k^{t-1}$) is the median 
agent (lies in the median interval of  $C_{kt}^*$ in the case $|C_{kt}^*|=2k$) of 
$C_{kt}^*$ (resp.$C_{k(t-1)}^*$)  in this case, 
$x_k^{*t}$ (resp. $x_k^{*t-1}$) has to connect $y_k^t$ (resp. $y_k^{t-1}$).
Thus, $\sum_{t=1}^T \sum_{k=1}^2  |y_k^t - x_k^{*t}|+
|y_k^{t-1} - x_k^{*t-1}|$ can be upper bounded by the optimal connection cost, 
which is 

$$ \sum_{t=1}^T \sum_{k=1}^2 \sum_{a \in C_{kt}^*} |x_k^{*t} - a|+\sum_{t=1}^T 
\sum_{k=1}^2 \sum_{a \in C_{k(t-1)}^*} |x_k^{*t-1} - a| 
\leq 2\sum_{t=1}^T \sum_{k=1}^2 H(C_{kt}^*)$$

\end{proof}


Lemma~\ref{l:median} indicates that the real difficulty of the problem is not 
determining the exact positions
of the facilities in the optimal solution, but to determine the \emph{service 
clusters} that the optimal solution forms. In fact,
if we knew the clusters $C_{1t}^*,C_{2t}^*$, then Lemma~\ref{l:median} provides 
us with a $3$-approximation algorithm.
Obviously, this information cannot be acquired in the online setting, since 
$C_{1t}^*,C_{2t}^*$ depend on the future positions
of the agents that
we do not know. We prove that Algorithm~\ref{alg:double_coverage} has an 
approximation guarantee of $21$ with respect to
 the solution $y$, that directly translates to an approximation guarantee of $63$ 
 with respect to
$Cost(x^*)$. The latter is formally stated in Lemma~\ref{l:competive_with_y} and 
is the main result of this section.

\begin{lemma}\label{l:competive_with_y}
Let $x=\{x_1^t,x_2^t\}_{t \geq 1}$ be the solution produced by 
Algorithm~\ref{alg:double_coverage}. Then,
the cost paid by solution $x$ at stage $t$,
$\sum_{k=1}^2|x_k^t - x_k^{t-1}| + \sum_{i=1}^n \min_{k=1,2}|x_k^t - 
\alpha_i^t|$, is at most
\[21\sum_{k=1}^2 [ H(C_{kt}^*) + |y_k^t - y_k^{t-1}|]   + \Phi_t(x^t) - 
\Phi_{t-1}(x^{t-1})\]
\text{where} $\Phi_t(x_1,x_2) = 2(|x_1 - y_1^t| +  |x_2 - y_2^t|) + |x_1 - x_2|$.
\end{lemma}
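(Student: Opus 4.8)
The plan is a standard potential-function amortized analysis, matched to the three-case structure of Algorithm~\ref{alg:double_coverage}. For a fixed stage $t$, write $x^{t-1}=(x_1^{t-1},x_2^{t-1})$ for the configuration at the start of the stage and $(z_1,z_2)$ for the intermediate configuration after Step~1. I would first handle Step~1 and Step~2 separately, bounding the change in $\Phi$ and the actual cost incurred in each, and then add them. Throughout, I abbreviate $H=H(C_t)$, and I use Lemma~\ref{l:median}'s solution $y$, recalling $y_k^t=M_{C_{kt}^*}$ whenever $C_{kt}^*\neq\emptyset$; the target ``offline'' quantity on the right is $R_t := \sum_{k=1}^2[H(C_{kt}^*)+|y_k^t-y_k^{t-1}|]$, and a useful first observation is that $H(C_t)\le H(C_{1t}^*)+H(C_{2t}^*)\le R_t$, since partitioning agents can only help and the median minimizes the $1$-facility cost on each part. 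Another repeatedly-used fact: $\sum_i\min_{k}|x_k^t-\alpha_i^t|$ is bounded by the $1$-facility cost $H(C_t)$ when a facility sits at $M_{C_t}$, and more generally by $H(C_{1t}^*)+H(C_{2t}^*)+$ (distance terms) via triangle inequality against the medians $y_k^t$.

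For \textbf{Step~1}, the movement is a ``double-coverage''-style move toward the interval $[\alpha_1^t,\alpha_n^t]$, so the key is to show $2(|z_1-y_1^t|+|z_2-y_2^t|)+|z_1-z_2| - \Phi_{t-1}(x^{t-1}) \le 0$, i.e. Step~1 does not increase the potential; then the actual moving cost of Step~1, $|z_1-x_1^{t-1}|+|z_2-x_2^{t-1}|$, is charged entirely against this potential drop — it is at most the decrease, because each unit a facility moves toward the agents decreases its $2|\cdot-y_k^t|$ term by $2$ units (the $y_k^t$ lie in or near $[\alpha_1^t,\alpha_n^t]$) while in the two-sided case the $|z_1-z_2|$ term also shrinks. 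This is the classical potential argument for double coverage on the line; the only care needed is the case analysis of Figure~\ref{f:1} (one facility right of all agents, one left, or both outside on opposite sides) and checking the edge behaviour when a facility is dragged exactly onto an endpoint. I expect this part to be routine.

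For \textbf{Step~2}, I split into the three cases of the algorithm. In the first two (symmetric) cases — say facility $1$ ends at $M_{C_t}$ and facility $2$ retreats left by $3H$ — the actual cost is the connection cost (at most $H=H(C_t)\le R_t$) plus the moving cost $3H$ of facility $2$; the potential change is $+2\cdot 3H$ from facility $2$'s term moving away from $y_2^t$ in the worst direction, plus at most $2\,|z_1 - M_{C_t}| \le 2H$ adjustment for facility $1$ (bounded since $z_1\in[\alpha_1^t,\alpha_n^t]$ and $M_{C_t}$ there too, and the agents' spread is $\le 2H$... actually $\le 2H$ needs the bound $\alpha_n^t-\alpha_1^t\le 2H(C_t)$, which holds), plus the change in $|x_1-x_2|$, so the whole stage cost is $O(1)\cdot H \le O(1)\cdot R_t$, comfortably inside the $21R_t$ budget; the guard condition $z_2-\alpha_n^t\ge 3H$ guarantees facility $2$ actually has room to move left by $3H$ and stays consistent with the potential bookkeeping. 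In the third (``Else'') case, the algorithm plays the optimal $2$-partition $(O_1,O_2)$ with facilities at $M_{O_1},M_{O_2}$, so the connection cost is exactly the optimal $2$-facility connection cost $\le H(C_{1t}^*)+H(C_{2t}^*)\le R_t$; the moving cost and potential change must be absorbed using that the Else case is only reached when the guards \emph{fail}, i.e. when the relevant retreating facility is within $3H$ of the agent interval — this bounds how far facilities must travel to reach $M_{O_1},M_{O_2}$ by $O(H)$, and hence bounds the $\Phi$ increase by $O(H)\le O(R_t)$. Summing Step~1 (cost $\le$ potential drop) and Step~2 (cost $+$ potential rise $\le 21R_t$) across the constants gives the claimed per-stage inequality.

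The main obstacle I anticipate is the \textbf{Else case}: unlike double coverage, here the facilities may both need to move to the two medians of the optimal-\emph{current}-partition, and the clusters $(O_1,O_2)$ of the current stage need not align with the optimal-\emph{offline} clusters $C_{1t}^*,C_{2t}^*$ or with $y_k^{t-1}$, so controlling $|x_k^t - x_k^{t-1}|$ and the jump $|x_k^t - y_k^t|$ requires carefully exploiting the negation of \emph{both} guard conditions simultaneously (each retreating facility is close to the agent interval) together with $\alpha_n^t-\alpha_1^t\le 2H(C_t)$ to confine everything to an interval of length $O(H)$ around the agents; matching the constant $21$ (and through Lemma~\ref{l:median}, $63$) then comes down to bookkeeping the worst-case coincidence of a $3H$-move with a $2\cdot(\text{displacement})$ potential swing in each of the finitely many sub-cases. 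I would organize the write-up as: (i) preliminary inequalities ($H(C_t)\le R_t$, $\alpha_n^t-\alpha_1^t\le 2H(C_t)$, median/triangle bounds on connection cost); (ii) Step~1 potential-drop lemma; (iii) the two symmetric Step~2 cases; (iv) the Else case; (v) addition and constant-chasing.
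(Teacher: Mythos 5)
Your overall architecture matches the paper's: a potential-function amortization with the potential $\Phi_t$, a split into a Step-1 bound (moving cost charged to the potential drop plus the movement of $y$) and a Step-2 bound (connection plus residual moving cost charged to $21\sum_k H(C_{kt}^*)$ plus a potential change), followed by the case analysis of the algorithm. However, there is a genuine gap at the heart of your Step-2 analysis: the ``useful first observation'' $H(C_t)\le H(C_{1t}^*)+H(C_{2t}^*)$ is false --- it is exactly backwards. Since partitioning can only help, one has $H(C_{1t}^*)+H(C_{2t}^*)\le H(C_t)$, and the gap can be arbitrarily large (two tight clusters of agents far apart give $H(C_{1t}^*)+H(C_{2t}^*)\approx 0$ while $H(C_t)$ is huge). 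Every place you write ``$O(1)\cdot H\le O(1)\cdot R_t$'' therefore fails, and this is not a fixable constant: it is precisely the hard case of the lemma. In the first two cases of Step~2 you also budget the retreat of facility $2$ as a potential \emph{increase} of $2\cdot 3H(C_t)$ ``moving away from $y_2^t$ in the worst direction'' and then absorb it with the false inequality; the correct argument goes the other way. When $C_{2t}^*\neq\emptyset$ one has $y_2^t\le \alpha_n^t\le z_2-3H(C_t)$, so the retreat moves facility $2$ \emph{toward} $y_2^t$ and the potential \emph{drops} by $6H(C_t)$, which is what pays for the $5H(C_t)$ of actual cost when $\sum_k H(C_{kt}^*)$ is tiny. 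Similarly, in the Else case, when both $C_{kt}^*$ are nonempty you cannot bound the displacement by $O(H(C_t))$ and charge it to $R_t$; the paper instead bounds $\sum_k|x_k^t-y_k^t|$ by $\sum_k\bigl[H(C_{kt})+H(C_{kt}^*)\bigr]$ using that both $x^t$ and $y^t$ must place one facility near $\alpha_1^t$ and one near $\alpha_n^t$. The easy subcase where bounding everything by $H(C_t)$ does work is exactly the one where some $C_{kt}^*=\emptyset$, so that $H(C_t)=\sum_k H(C_{kt}^*)$; your proof needs this dichotomy and currently lacks it.

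A secondary, smaller issue: your Step-1 claim that ``Step~1 does not increase the potential,'' i.e.\ $\Phi_t(z)\le\Phi_{t-1}(x^{t-1})$, is not true in general, because the reference points change from $y^{t-1}$ to $y^t$ between the two potentials; if the facilities do not move but $y$ does, the potential can jump by up to $2\sum_k|y_k^t-y_k^{t-1}|$. The correct statement (the paper's Lemma~\ref{l:z1}) is $\sum_k|z_k-x_k^{t-1}|\le 2\sum_k|y_k^t-y_k^{t-1}|-\Phi_t(z)+\Phi_{t-1}(x^{t-1})$, obtained by first paying $2\sum_k|y_k^t-y_k^{t-1}|$ for the change of reference and then showing $\Phi_t(z)-\Phi_t(x^{t-1})\le-\sum_k|z_k-x_k^{t-1}|$ in each of the three cases. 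Since your budget $R_t$ already contains the $|y_k^t-y_k^{t-1}|$ terms, this is repairable, but as stated the inequality you propose to prove is false.
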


Lemma~\ref{l:competive_with_y} directly implies Theorem~\ref{t:competitive} by 
applying
a telescopic sum over all $t$ and then applying Lemma~\ref{l:median}.  Notice 
that the additive term $|x_1^0 - x_2^0|$ in Theorem~\ref{t:competitive} depends 
only on the initial positions of the facilities and follows from
the fact that $\Phi_0(x^0) = |x_1^0 - x_2^0|$. Since the additive term is a 
constant independent from the request sequence (the client positions $C_t$), it 
is common to define the competitive ratio of an online algorithm as in 
Section~\ref{sec:prelim}.
In the rest of the section, we present the proof ideas of 
Lemma~\ref{l:competive_with_y}, which come together with explaining 
Steps~$1$ and~$2$ of our algorithm. 
Let us start with explaining Step~$1$. First, note that since $x_1^0\leq x_2^0$, 
then $x_1^t\leq x_2^t$ by our algorithm construction. Now, assume that 
$x_2^{t-1} \leq \alpha_1^t$ (second case).
Before deciding the exact positions of the facilities, we can \emph{safely} move 
facility $2$ to the right until reaching
$\alpha_1^t$. The term \emph{safely} means that this moving cost is 
\emph{roughly} upper bounded by the moving cost $\sum_{k=1}^2|y_k^t - 
y_k^{t-1}|$.
This \emph{safe moving} applies to all three cases of Step~$1$ in 
Algorithm~\ref{alg:double_coverage}
and is formally stated in Lemma~\ref{l:z1}.

\begin{figure}[t]
\centering
\begin{tikzpicture}[scale = 1]
\draw (-5.5,0) -- (4.5,0);
\coordinate (y1p) at (-4.5,0);
\coordinate (y1a) at (-2,0);
\coordinate (y2p) at (-3,0);
\coordinate (y2a) at  (-1,0);
\coordinate (y1O) at  (-3,0);
\coordinate (y2O) at  (1,0);
\coordinate (a1) at (-1,0);
\coordinate (a2) at (2,0);

 \filldraw (y1p) node[align=center,   below] {$x_1^{t-1} $} --
(y1p)  node[cross,thick,minimum size=4pt] {}     --
(y2p) node[align=center,  below] {$x_2^{t-1}$} --
(y2p)  node[cross,thick,minimum size=4pt] {}     --
(a1) circle (2pt) node[align=center, above] {$\alpha_1^t$} --
(-0.7,0) circle (2pt)  --
(0.2,0) circle (2pt)   --
(0.8,0) circle (2pt)  --
(1.3,0) circle (2pt)  --
(a2) circle (2pt) node[align=center, above] {$\alpha_n^t$} ;
\draw[->,red,thick] (y2p) to  (y2a);
\node[text width = 8cm] at (0,-1.7){\small If both facilities $1$ and $2$ are
on the left of the agents, then facility $2$ is moved to the right
until hitting the position of the leftmost agent (the case with facilities $1$ and $2$
on the right of agents is symmetric).
};
\end{tikzpicture}

\begin{tikzpicture}[scale = 1]
\draw (-5.5,0) -- (4.5,0);
\coordinate (y1p) at (-4.5,0);
\coordinate (y1a) at (-2.5,0);
\coordinate (y2p) at (4,0);
\coordinate (y2a) at  (2,0);
\coordinate (y1O) at  (-3,0);
\coordinate (y2O) at  (1,0);
\coordinate (a1) at (-1,0);
\coordinate (a2) at (2,0);

 \filldraw (y1p) node[align=center,   below] {$x_1^{t-1} $} --
 (-2.5,0) node[align=center,   below=3pt] {} --
(y1p)  node[cross,thick,minimum size=4pt] {}     --
(y1a)  node[cross,thick,minimum size=4pt] {}     --
(y2p) node[align=center,  below] {$x_2^{t-1}$} --
(y2p)  node[cross,thick,minimum size=4pt] {}     --
(a1) circle (2pt) node[align=center, above] {$\alpha_1^t$} --
(-0.3,0) circle (2pt)  --
(0.2,0) circle (2pt)   --
(0.8,0) circle (2pt)  --
(1.3,0) circle (2pt)  --
(a2) circle (2pt) node[align=center, above] {$\alpha_n^t$} --
(y2a)  node[cross,thick,minimum size=4pt] {}     --
(y2a) node[align=center, below=5pt] {} ;

\draw[->,red,thick] (y1p) to (y1a);
\draw[->,red,thick] (y2p) to (y2a);
\node[text width = 8cm] at (0,-1.7){\small If facility $1$ is on the left of the agents
and facility $2$ is on the right of the agents, then both facilities are moved with 
the same speed
towards the interval $[\alpha_1^t,\alpha_n^t]$ until one of them hits the interval.
};

\end{tikzpicture}

\caption{Step~$1$ of Algorithm~\ref{alg:double_coverage} is depicted. After this 
step, the positions of the facilities are denoted by $z_1,z_2$ in 
Algorithm~\ref{alg:double_coverage}.}
\label{f:1}
\setlength{\belowcaptionskip}{0pt}
\end{figure}

\begin{lemma}\label{l:z1}
Let $z = (z_1,z_2)$ denote the values of the variables $z_1,z_2$ after Step~1 of 
Algorithm~\ref{alg:double_coverage}. Then,
\[\sum_{k=1}^2|z_k - x_k^{t-1}| \leq 2\sum_{k=1}^2 |y_k^t - y_k^{t-1}| -\Phi_t(z) 
+\Phi_{t-1}(x^{t-1})\]
\end{lemma}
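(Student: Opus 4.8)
\textbf{Proof plan for Lemma~\ref{l:z1}.}

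The plan is to treat the three cases of Step~1 separately, since in each case only one or two facilities actually move, and to show that the potential $\Phi$ decreases by at least the ``extra'' amount spent beyond $\sum_k |y_k^t - y_k^{t-1}|$. First I would record the key structural fact that $y_k^t$ always lies in the interval spanned by the agents at stage $t$ (in the nonempty-cluster case $y_k^t = M_{C_{kt}^*}$ is a median of a subset of $C_t$, hence in $[\alpha_1^t,\alpha_n^t]$); this is what makes the ``safe moving'' legitimate. Note also that $\Phi_t$ and $\Phi_{t-1}$ use different reference points $y^t$ and $y^{t-1}$, so the change in potential splits into a term comparing $x^{t-1}$ and $z$ against the \emph{same} reference $y^{t-1}$, plus a ``drift'' term $2\sum_k(|z_k - y_k^t| - |z_k - y_k^{t-1}|)$ which by the triangle inequality is at most $2\sum_k |y_k^t - y_k^{t-1}|$. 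So it suffices to show
\[
\sum_{k=1}^2 |z_k - x_k^{t-1}| \;\le\; -\,2\sum_{k=1}^2 \bigl(|z_k - y_k^{t-1}| - |x_k^{t-1} - y_k^{t-1}|\bigr) \;-\;\bigl(|z_1-z_2| - |x_1^{t-1}-x_2^{t-1}|\bigr),
\]
i.e. the movement in Step~1 is paid for entirely by the drop in the $y^{t-1}$-anchored part of the potential.

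For the first two (symmetric) cases, suppose $x_2^{t-1} \le \alpha_1^t$, so only facility~2 moves, rightward, to $z_2 = \alpha_1^t$, and $z_1 = x_1^{t-1}$. The moving cost is $\alpha_1^t - x_2^{t-1}$. Because $x_2^{t-1} \le \alpha_1^t$ and $z_2 = \alpha_1^t \le \alpha_n^t$, moving facility~2 rightward toward $[\alpha_1^t,\alpha_n^t]$ moves it toward $y_2^{t-1}$ whenever $y_2^{t-1} \ge \alpha_1^t$ — but I cannot assume that, so I would instead argue: either $y_2^{t-1} \ge x_2^{t-1}$, in which case $|z_2 - y_2^{t-1}| - |x_2^{t-1}-y_2^{t-1}| \le \alpha_1^t - x_2^{t-1}$ always, and more importantly if $y_2^{t-1}$ is far to the right the decrease equals the full moving cost; or $y_2^{t-1} < x_2^{t-1} \le \alpha_1^t$, in which case facility~2 moves away from $y_2^{t-1}$ and we must absorb this into the $|y_2^t - y_2^{t-1}|$ drift term — here $y_2^t \in [\alpha_1^t,\alpha_n^t]$ while $y_2^{t-1} < x_2^{t-1}$, so $|y_2^t - y_2^{t-1}|$ is itself at least $\alpha_1^t - x_2^{t-1}$, giving enough slack. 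The term $|x_1-x_2|$ increases by exactly the same moving cost since $z_1$ is fixed and $z_2$ moved right; this is why the $2\sum|y_k^t-y_k^{t-1}|$ on the right (rather than $1\times$) is needed — one copy pays for the anchored-potential change, one copy covers the $|x_1-x_2|$ growth and the adverse drift.

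For the third case, $x_1^{t-1} < \alpha_1^t$ and $x_2^{t-1} > \alpha_n^t$: both facilities move \emph{toward each other} at equal speed $\delta := \min(|x_1^{t-1}-\alpha_1^t|, |x_2^{t-1}-\alpha_n^t|)$, so total moving cost is $2\delta$, and crucially $|z_1 - z_2| = |x_1^{t-1}-x_2^{t-1}| - 2\delta$, so the $-(|z_1-z_2|-|x_1^{t-1}-x_2^{t-1}|) = 2\delta$ term alone covers the entire moving cost, and it remains to check that the $y^{t-1}$-anchored part does not increase by more than $2\sum_k|y_k^t-y_k^{t-1}|$ allows — which follows because each $y_k^{t-1}$, wherever it is, can be ``pushed past'' by at most $\delta$ in the wrong direction, and that adverse amount is again dominated by the corresponding $|y_k^t - y_k^{t-1}|$ since $y_k^t$ lies in $[\alpha_1^t,\alpha_n^t]$ which is on the far side of the starting position. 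The main obstacle I expect is precisely this bookkeeping of signs: handling the sub-case where $y_k^{t-1}$ sits on the ``wrong side'' of $x_k^{t-1}$ so that the safe move increases $|x_k - y_k^{t-1}|$, and verifying in each such sub-case that the geometry ($y_k^t$ being trapped in the agent interval) forces $|y_k^t - y_k^{t-1}|$ to be large enough to absorb it with the factor-2 budget. I would organize this as a short case analysis on the sign of $y_k^{t-1} - x_k^{t-1}$ within each of the three Step~1 cases, reducing everything to one-dimensional interval inequalities.
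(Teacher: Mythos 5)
Your plan is sound and the underlying geometric facts are right, but your decomposition is the mirror image of the one the paper uses, and the paper's order of operations is noticeably cleaner. You swap anchors \emph{after} the move: $\Phi_t(z)-\Phi_{t-1}(x^{t-1}) = \bigl(\Phi_{t-1}(z)-\Phi_{t-1}(x^{t-1})\bigr) + \bigl(\Phi_t(z)-\Phi_{t-1}(z)\bigr)$, bounding the second (drift) term by $2\sum_k|y_k^t-y_k^{t-1}|$ via the triangle inequality and charging the move to the $y^{t-1}$-anchored part. As you yourself note, the $y^{t-1}$-anchored part need not drop (when $y_k^{t-1}$ sits on the wrong side of $x_k^{t-1}$), so your displayed ``it suffices to show'' inequality is false as stated and you are forced to go back and extract more than the crude triangle bound from the drift term in those sub-cases. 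This can be made to work --- e.g.\ in your first case with $y_2^{t-1}\le x_2^{t-1}$ one checks $2(|z_2-y_2^t|-|z_2-y_2^{t-1}|)\le 2|y_2^t-y_2^{t-1}|-4|z_2-x_2^{t-1}|$, which exactly absorbs the adverse $+3\delta$ --- but it multiplies sub-cases. The paper instead swaps anchors \emph{before} the move, writing $\Phi_t(z)-\Phi_{t-1}(x^{t-1}) = \bigl(\Phi_t(z)-\Phi_t(x^{t-1})\bigr) + \bigl(\Phi_t(x^{t-1})-\Phi_{t-1}(x^{t-1})\bigr)$; the second difference is $\le 2\sum_k|y_k^t-y_k^{t-1}|$ by one application of the triangle inequality, and the first is then analyzed entirely against the \emph{new} anchors $y^t$. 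Since Step~1 always moves facilities toward $[\alpha_1^t,\alpha_n^t]$, which contains the relevant $y_k^t$, the move is automatically toward the anchor that matters and one gets $\Phi_t(z)-\Phi_t(x^{t-1})\le -\sum_k|z_k-x_k^{t-1}|$ with essentially no wrong-side bookkeeping. Two further cautions: your ``key structural fact'' is overstated --- only the $y_k^t$ with $C_{kt}^*\neq\emptyset$ is guaranteed to lie in $[\alpha_1^t,\alpha_n^t]$; the argument only needs (and only has) that \emph{at least one} of $y_1^t,y_2^t$ lies there, combined with $y_1^t\le y_2^t$ to conclude $y_2^t\ge\alpha_1^t$ (resp.\ $y_1^t\le\alpha_n^t$). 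Correspondingly, in your third case the ``$y_k^t$ is trapped in the agent interval, so $|y_k^t-y_k^{t-1}|$ absorbs the adverse amount'' argument is only available for one of the two facilities; for the other, the compensation has to come from the $-2\delta$ contributed by the shrinking $|x_1-x_2|$ term, not from its drift.
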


The proof of Lemma~\ref{l:z1} can be found in subsection~\ref{app:2} of the 
Appendix. Lemma~\ref{l:z1} reveals the basic idea of Step~$1$ performed by the 
online algorithm. We remind that Step~$1$ is performed by 
Algorithm~\ref{alg:double_coverage} if both facilities are outside the interval 
$C_t$ at the beginning of stage $t$. Therefore, it distinguishes between the three 
cases depicted in Figure~\ref{f:1} (We show 2 cases since the case with both 
facilities on the right of the agents is symmetric to the first). Moving with the 
same speed towards the interval $[a_1^t, a_n^t]$ results to the same moving 
cost for both facilities; both facilities will move the distance of the facility which 
is closest to its closest agent.

According to the \emph{geometry} of the agents' positions, we can identify a 
\emph{safe move} whose cost is also paid by solution $y$ for moving the 
facilities.
Moreover, the proof of Lemma~\ref{l:z1} reveals why we compare our algorithm
with the solution $y$ and not directly with $x^*$. All these \emph{safe moves} 
are based on the fact that
either $y_1^t$ or $y_2^t$ lies in the $C_t=[\alpha_1^t,\alpha_n^t]$ (the latter 
does not necessarily hold for $x^*$).
Finally, the \emph{potential function} $\Phi_t(x_1,x_2)$
is crucial, since it permits \emph{safe moves}, when all agents are on the 
right/left of the facilities (first/second case)
as well as when they are contained in the interval $[x_1^{t-1},x_2^{t-1}]$ (third 
case). This idea
was first developed for the $K$-server problem \citet{K2009}.

Up next, we analyze the ideas of Step~$2$. We now need to bound the 
connection cost plus some  additional moving cost from the point where 
\emph{the safe move stopped}. The following lemma formalizes the guarantees 
provided by Algorithm~\ref{alg:double_coverage} after the execution of 
Step~$2$. 
The full proof of Lemma~\ref{l:z2} can be found in subsection~\ref{app:2} of the 
Appendix.

\begin{lemma}\label{l:z2}
\emph{Let $x^t = (x_1^t,x_2^t)$ denote the locations of facilities at stage $t$ 
after the execution of Step~2.
Then,} \[ \sum_{k=1}^2 [H(C_{kt}) + |x^t_k - z_k|]  \leq 21\sum_{k=1}^2H(C_{kt}^*) 
-\Phi_t(x^t) + \Phi_t(z)\]
\end{lemma}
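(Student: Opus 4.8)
The plan is a case analysis over the three branches of Step~2, after recording three preliminaries. (a)~Since $x_1^0\le x_2^0$, Step~1 keeps $z_1\le z_2$ and the algorithm keeps $x_1^t\le x_2^t$; moreover facilities do not cross in an optimal solution, so $C_{1t}^{*},C_{2t}^{*}$ form a left--right partition of $C_t$ and $y_1^t\le y_2^t$. (b)~$\alpha_n^t-\alpha_1^t\le H(C_t)$, since $M_{C_t}$ lies between $\alpha_1^t$ and $\alpha_n^t$. (c)~After Step~1 at least one of $z_1,z_2$ is in $[\alpha_1^t,\alpha_n^t]$, and unless we are in the first two branches of Step~2 \emph{both} of $z_1,z_2$ lie within distance $3H(C_t)$ of $[\alpha_1^t,\alpha_n^t]$ (otherwise a guard $z_2-\alpha_n^t\ge 3H(C_t)$ or $\alpha_1^t-z_1\ge 3H(C_t)$ would fire). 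For the connection term I would use that $H(A)+H(B)\le H(C_t)$ for every partition $(A,B)$ of $C_t$ (put both medians at $M_{C_t}$): hence $\sum_kH(C_{kt})\le H(C_t)$ in the first two branches, while in the third branch the algorithm uses the connection-optimal partition, so $\sum_kH(C_{kt})=H(O_1)+H(O_2)\le H(C_{1t}^{*})+H(C_{2t}^{*})$ because $(C_{1t}^{*},C_{2t}^{*})$ is a left--right partition of $C_t$.

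In the first two (symmetric) branches, one facility is relocated inside $[\alpha_1^t,\alpha_n^t]$ at cost $\le\alpha_n^t-\alpha_1^t\le H(C_t)$ and the other is moved by exactly $3H(C_t)$ toward the interval, remaining on its original side of it; thus $\sum_k|x_k^t-z_k|\le 4H(C_t)$ and the left-hand side is at most $5H(C_t)$. If the optimal cluster on the side of the $3H(C_t)$-move is nonempty, then $y_k^t=M_{C_{kt}^{*}}\in[\alpha_1^t,\alpha_n^t]$ lies on the far side of $x_k^t$ from $z_k$, so that coordinate's distance to $y_k^t$ shrinks by exactly $3H(C_t)$; weighted by the $2$ in $\Phi$ this yields a credit of $6H(C_t)$, which after bounding the remaining changes of $\Phi$ by $O(H(C_t))$ dominates the $5H(C_t)$ paid. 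If that cluster is empty, then the other optimal cluster equals $C_t$, so $\sum_kH(C_{kt}^{*})=H(C_t)$ and the term $21H(C_t)$ on the right absorbs the $5H(C_t)$ plus the $O(H(C_t))$ worst-case drop of $\Phi$; the only a priori unbounded $\Phi$-term, the distance to the parked optimal facility $y_k^t=x_k^{*t}$, cancels because $x_k^t$ and $z_k$ differ by only $3H(C_t)$.

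For the third branch, if some $C_{kt}^{*}$ is empty then $\sum_kH(C_{kt}^{*})=H(C_t)$ and, by~(c), both $z_k$ and both $x_k^t=M_{O_k}$ lie within $O(H(C_t))$ of $[\alpha_1^t,\alpha_n^t]$, so the left-hand side and the change of $\Phi$ are $O(H(C_t))$, absorbed by $21H(C_t)$. Otherwise $y_k^t=M_{C_{kt}^{*}}$ for both $k$, and applying the triangle inequality twice to each coordinate of $\Phi$ gives
\[
\sum_k|x_k^t-z_k|\ \le\ \Phi_t(z)-\Phi_t(x^t)+4\sum_k|M_{O_k}-M_{C_{kt}^{*}}|,
\]
so, with the connection bound, the left-hand side is at most $\sum_kH(C_{kt}^{*})+\Phi_t(z)-\Phi_t(x^t)+4\sum_k|M_{O_k}-M_{C_{kt}^{*}}|$. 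It then suffices to prove
\[
|M_{O_1}-M_{C_{1t}^{*}}|+|M_{O_2}-M_{C_{2t}^{*}}|\ \le\ 3\bigl(H(C_{1t}^{*})+H(C_{2t}^{*})\bigr),
\]
which plugged back yields a left-hand side at most $13\sum_kH(C_{kt}^{*})+\Phi_t(z)-\Phi_t(x^t)$, below the claimed bound. This metric inequality I expect to be the main obstacle.

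I would prove it by comparing the two left--right partitions at the level of ranks. Set $|O_1|=j$, $|C_{1t}^{*}|=j^{*}$ and assume without loss of generality $j\le j^{*}$; then the four medians are agents of ranks $r_1\le r_1^{*}\le r_2\le r_2^{*}$, so $|M_{O_1}-M_{C_{1t}^{*}}|=\alpha_{r_1^{*}}-\alpha_{r_1}\le H(C_{1t}^{*})$ because $r_1\le j^{*}$. For the other term: if $r_2>j^{*}$ then $M_{O_2}$ is an agent of $C_{2t}^{*}$, whence $|M_{O_2}-M_{C_{2t}^{*}}|\le H(C_{2t}^{*})$; if $r_2\le j^{*}$ then $M_{O_2}$ lies inside $C_{1t}^{*}$, and here the optimality of $(O_1,O_2)$ against the split at the gap $g=\alpha_{j^{*}+1}-\alpha_{j^{*}}$ between the two optimal clusters forces $g\le H(C_{1t}^{*})+H(C_{2t}^{*})$ (each of the $\ge 1$ agents of $C_{2t}^{*}$ contributes at least $g$ to $H(O_2)\le H(C_{1t}^{*})+H(C_{2t}^{*})$), which then bounds $|M_{O_2}-M_{C_{2t}^{*}}|$ by $2\bigl(H(C_{1t}^{*})+H(C_{2t}^{*})\bigr)$. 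The case $j>j^{*}$ is symmetric. Finally I would check that the constant $21$ comfortably covers all the $O(H(C_t))$ slack terms above; the bookkeeping is intentionally lossy, consistent with the remark that the ratio can be sharpened by a more careful analysis.
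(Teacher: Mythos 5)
Your overall architecture matches the paper's: a case split on which branch of Step~2 fires, a further split on whether both $C_{1t}^*$ and $C_{2t}^*$ are nonempty, the bound $\sum_{k}[H(C_{kt})+|x_k^t-z_k|]\le 5H(C_t)$ in the one-facility branches, and the triangle-inequality manipulation $\Phi_t(x^t)-\Phi_t(z)\le 4\sum_k|x_k^t-y_k^t|-\sum_k|x_k^t-z_k|$ in the two-facility branch. Where you genuinely diverge is in the hardest subcase (else-branch, both optimal clusters nonempty): the paper bounds $\sum_k|M_{O_k}-M_{C_{kt}^*}|$ by anchoring all four medians to the extreme agents --- since $\alpha_1^t$ connects to the left facility and $\alpha_n^t$ to the right one in both solutions, $|x_1^t-\alpha_1^t|+|x_2^t-\alpha_n^t|\le\sum_kH(C_{kt})$ and likewise for $y^t$, giving $\sum_k|x_k^t-y_k^t|\le\sum_k[H(C_{kt})+H(C_{kt}^*)]\le 2\sum_kH(C_{kt}^*)$ and the coefficient $9$ in that subcase. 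You instead prove the standalone median-stability inequality $\sum_k|M_{O_k}-M_{C_{kt}^*}|\le 3\sum_kH(C_{kt}^*)$ by comparing ranks; I checked your rank/gap argument and it is sound (in particular the step $g=\alpha_{j^*+1}^t-\alpha_{j^*}^t\le H(O_2)\le H(C_{1t}^*)+H(C_{2t}^*)$, which uses the optimality of $(O_1,O_2)$ against the split at $j^*$, is correct), and it yields $13$ in place of $9$. This is a cleaner, reusable fact about one-dimensional medians at the price of a slightly worse constant; the paper's anchoring trick is shorter but more ad hoc.

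The one place your sketch does not close as written is the first branch with the far optimal cluster nonempty. There the payment is $5H(C_t)$, the right-hand side term $21\sum_kH(C_{kt}^*)$ is useless because both clusters being nonempty allows $\sum_kH(C_{kt}^*)$ to be arbitrarily small, and the credit you extract is $2\cdot 3H(C_t)=6H(C_t)$ from the $|x_2^t-y_2^t|$ coordinate; hence the remaining terms of $\Phi_t(x^t)-\Phi_t(z)$ must contribute at most $+H(C_t)$ in total, and a generic ``$O(H(C_t))$'' bound is not enough --- the naive estimates $2(|x_1^t-y_1^t|-|z_1-y_1^t|)\le 2H(C_t)$ and $|x_1^t-x_2^t|-|z_1-z_2|\le\sum_k|x_k^t-z_k|\le 4H(C_t)$ would wipe out the entire credit. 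What you need, and what the paper uses, is that the $|x_1-x_2|$ term of the potential also drops: since $x_2^t-z_2=-3H(C_t)$ while $|x_1^t-z_1|\le H(C_t)$ and the order of the facilities is preserved, $|x_1^t-x_2^t|-|z_1-z_2|=(x_2^t-z_2)-(x_1^t-z_1)\le -2H(C_t)$, giving $\Phi_t(x^t)-\Phi_t(z)\le 2H(C_t)-6H(C_t)-2H(C_t)=-6H(C_t)\le -5H(C_t)$. Make that estimate explicit and the branch closes; similarly, note that in the else-branch empty-cluster subcase the constant $21$ is met with no slack at all rather than ``comfortably.''
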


When Algorithm~\ref{alg:double_coverage} performs Step~$2$, we know that at 
least one facility lies inside the interval $C_t$.  This facility will definitely connect 
some agents of $C_t$, since we can charge it a small moving cost even if it 
connects all agents. Thus, the algorithm needs only to decide whether it will 
connect agents by using only one facility or by using both facilities.  The 
decision depends on the distance between the facility, which is outside of $C_t$ 
(in case there is one), and the closest agent to this facility. If this distance is 
"small" (resp. if the facility is already inside the interval), 
Algorithm~\ref{alg:double_coverage}
will connect agents to both facilities minimizing the connection cost using two 
facilities. This will guarantee that the moving cost and connection cost incurred 
are relatively small compared to the cost of solution $y$.

Now, if the facility, which is outside the interval $C_t$, is "far" from its closest 
agent, Algorithm~\ref{alg:double_coverage}
moves this facility towards $C_t$ by a distance, depending on the optimal 
connection cost using one facility, and serves all agents using the facility, which 
is inside $C_t$. Then, we can prove that this move is sufficient to bound the 
total cost of the algorithm compared to the cost of solution $y$, even if $y$ has 
arbitrarily smaller connection cost (if it uses both facilities to serve the agents). 
The choices of  Algorithm~\ref{alg:double_coverage} are depicted in 
Figure~\ref{f:2}. We provide more detailed Figures based on the  analysis of  
Algorithm~\ref{alg:double_coverage} in subsection~\ref{app:2} of the Appendix.

\begin{figure}[ht]
\centering
\begin{tikzpicture}[scale = 1]
\draw (-5.5,0) -- (4.5,0);
\coordinate (y1p) at (-3.5,0);
\coordinate (y1a) at (-2,0);
\coordinate (y2p) at (4,0);
\coordinate (y2a) at  (2,0);
\coordinate (y1O) at  (-2,0);
\coordinate (y2O) at  (2,0);
\coordinate (a1) at (-4.5,0);
\coordinate (a2) at (1.5,0);
\filldraw (y1p) node[align=center,   below=4pt] {} --
(y1a)  node[align=center, below] {$x_1^t$}     --
(y1p)  node[align=center, below] {$z_1$}     --
(y1a)  node[cross,thick,minimum size=4pt] {}     --
(y2p) node[align=center,  below=3pt] {$z_2$} --
(y2p)  node[cross,thick,minimum size=3pt] {}     --
(a1) circle (2pt) node[align=center, above] {$\alpha_1^t$} --
(a2) circle (2pt) node[align=center, above] {$\alpha_n^t$} --
(-4,0) circle (2pt)  --
(-2.3,0) circle (2pt)   --
(-1.4,0) circle (2pt)  --
(-0.6,0) circle (2pt)  --
(y2a)  node[cross,thick,minimum size=4pt] {}  ;
\draw[->,blue,thick] (a1) to[out=45,in=135] (y1a);
\draw[->,blue,thick] (-4,0) to[out=40,in=140] (y1a);
\draw[->,blue,thick] (-2.3,0) to (y1a);
\draw[->,blue,thick] (-1.4,0) to (y1a);
\draw[->,blue,thick] (-0.6,0) to[out=140,in=40] (y1a);
\draw[->,blue,thick] (a2) to[out=145,in=35] (y1a);
\draw[->,red,thick] (y1p) to[out=320,in=230]  (y1a);

\draw[->,red,thick] (y2p) to  (y2a);
\node[text width = 8cm] at (0,-1.7){ \small The first choice of Step~2 is depicted. 
In this case,
the facility initially lying inside the interval $[\alpha_1^t,\alpha_n^t]$ moves to the 
median of agents. In this position, the connection cost is minimized using one 
facility.
};
\end{tikzpicture}

\begin{tikzpicture}[scale = 1]
\draw (-5.5,0) -- (4.5,0);
\coordinate (y1p) at (-4.4,0);
\coordinate (y1a) at (-3,0);
\coordinate (y2p) at (3,0);
\coordinate (y2a) at  (1,0);
\coordinate (a1) at (-4.5,0);
\coordinate (a2) at (2.5,0);
\filldraw (y1p) node[align=center,   below=4pt] {$z_1$} --
(y1a)  node[align=center, below] {$x_1^t$}     --
(y1a)  node[cross,thick,minimum size=4pt] {}     --
(y2p) node[align=center,  below=3pt] {$x_2^{t-1}$} --
(y2p)  node[cross,thick,minimum size=3pt] {}     --
(a1) circle (2pt) node[align=center, above] {$\alpha_1^t$} --
(a2) circle (2pt) node[align=center, above] {$\alpha_n^t$} --
(-4,0) circle (2pt)  --
(-2.3,0) circle (2pt)   --
(-1.4,0) circle (2pt)  --
(-0.4,0) circle (2pt)   --
(0.1,0) circle (2pt)   --
(1.4,0) circle (2pt)  --
(y2a) node[align=center, below] {$x_2^t$}    --
(y2a)  node[cross,thick,minimum size=4pt] {}  ;
\draw[->,blue,thick] (a1) to[out=45,in=135] (y1a);
\draw[->,blue,thick] (-4,0) to[out=40,in=140] (y1a);
\draw[->,blue,thick] (-2.3,0) to (y1a);
\draw[->,blue,thick] (-1.4,0) to[out=145,in=35] (y1a);
\draw[->,green,thick] (-0.4,0) to[out=45,in=135] (y2a);
\draw[->,green,thick] (0.1,0) to (y2a);
\draw[->,green,thick] (1.4,0) to[out=140,in=40] (y2a);
\draw[->,green,thick] (a2) to[out=145,in=35] (y2a);
\draw[->,red,thick] (y1p) to[out=320,in=230]   (y1a);

\draw[->,red,thick] (y2p) to[out=230,in=320]  (y2a);

\node[text width = 8cm] at (0,-1.7){\small The second choice of Step~2 is 
depicted. 
Facilities are placed to the positions, where the connection cost of the agents is 
minimized using two facilities.
};
\end{tikzpicture}
\caption{Step~$2$ of Algorithm~\ref{alg:double_coverage} is depicted.}
\label{f:2}
\end{figure}

\section{Open Problems}
Regarding the offline variant of the 
$K$-\emph{facility reallocation} problem, it would be interesting to consider the 
problem in general metric spaces. Since $K$-\emph{facility reallocation}  is 
essentially a dynamic $K$-\emph{median} problem, a main open problem is to 
design approximation algorithms for this problem as well as to find lower 
bounds on the approximation ratio of any offline algorithm in general metric 
spaces. Turning to the online variant, the main question arising is to design an 
online algorithm for online $K$-\emph{facility reallocation} problem on the line. 
This variant with any number of facilities seems much more intriguing. It would 
also be interesting to consider randomized algorithms for both the online and the 
offline variant.   
\bibliographystyle{plainnat}
\bibliography{1}

\appendix
\section{Appendix}
\begin{subsection}{Omitted proofs of 
Section~\ref{s:rounding_general}}\label{app:1}

\textbf{Lemma \ref{l:switching_cost}}
\emph{Let $S^t_k$ the 
fractional switching cost of facility $k$ at stage $t$. Then,}
\[\sum\limits_{t=1}^{T}\sum\limits_{k \in F} S^t_k =
\frac{1}{N}\sum\limits_{t=1}^{T}\sum\limits_{j=1}^{K\cdot N}d(Y_j^{t-1},Y_j^{t})\]

\begin{proof}
By Assumption~\ref{a:1}, $c_j^t=1/N$ if $j\in V_t^+= \{Y_1^t,\ldots,Y_{KN}^t\}$
and $0$ otherwise. Notice that the connection cost of the optimal fractional 
solution only depends on the
variables $c_j^t$. As a result, $f_{k j}^t,S_k^t,S_{k j l}^t$ must be the optimal 
solution of the following linear program.
$$
\begin{array}{ll@{}ll}
\text{minimize}  &  \sum \limits_{t=1}^{T}\sum \limits_{k=1}^KS_k^t\\
& \\
\text{s.t.}& \sum\limits_{k \in F }f^t_{kj}=\frac{1}{N}   & \forall j \in V_t^+,t\in \{1,T\} 
\\
        &\sum \limits_{j \in V_t^+} f_{kj}^t=1 & \forall k\in F, t\in \{1,T\}\\
        &S_k^t = \sum \limits_{j,l \in V} d(j,l)S_{kjl}^t & \forall k \in F, t\in \{1,T\} \\
        &\sum \limits_{j \in V_{t-1}^+} S_{kjl}^t =f_{kl}^t & \forall k \in F,l \in 
        V_t^+,t\in \{1,T\}\\
        &\sum \limits_{l \in V_t^+} S_{kjl}^t = f_{kj}^{t-1} & \forall k \in F,j \in 
        V_{t-1}^+,t\in \{1,T\}\\
\end{array}
$$

\noindent Instead of proving that the minimum cost of the above linear program
is $\frac{1}{N}\sum\limits_{t=1}^{T}\sum\limits_{j=1}^{K\cdot N}d(Y_j^{t-1},Y_j^t)$,
we prove this for the following more convenient relaxation of the above LP. 
\begin{equation}\label{eq:LP2}
\begin{array}{ll@{}ll}
\text{minimize}  &  \sum \limits_{t=1}^T\sum \limits_{j \in V_{t-1}^+,l \in V_t^+} 
d(j,l)F_{jl}^t\\
& \\
\text{s.t.}& \sum\limits_{l \in V_t^+}F^t_{jl}=\frac{1}{N}   & \forall j \in V_{t-1}^+,t\in 
\{1,T\} \\
        &\sum \limits_{j \in V_{t-1}^+} F^t_{jl}=\frac{1}{N} & \forall l\in V_t^+, t\in 
        \{1,T\}\\
\end{array}
\end{equation}
It is easy to prove that the LP~(\ref{eq:LP2}) is a relaxation of the first 
by setting $F_{jl}^t = \sum _{k \in F}S_{kjl}^t$. Moreover, the above LP 
describes a flow 
problem between the nodes $V_t^+$, where $F_{jl}^t$ is the amount of flow 
going from node $j \in V_{t-1}^+$
to node $l \in V_t^+$ (see Figure~\ref{f:flow}).

We are ready for the final step of our proof. First, observe that 
$F^t_{Y_j^{t-1}Y_j^t}$ is feasible
solution for the above LP since $|V_{t-1}^+|=|V_t^+|=K\cdot N$. If we prove that 
this assignment minimizes the objective, then we are done. Assume that in the 
optimal solution, $F^t_{Y_1^{t-1}Y_1^t}<1/N$. 
Since $\sum\limits_{l \in V_t^+} F^t_{Y_1^{t-1}l}=\frac{1}{N}$,
there exists $Y_j^{t}$ such that $F^t_{Y_1^{t-1}Y_j^{t}}>0$. Similarly, by using 
the second constraint we obtain that 
$F^t_{Y_{j'}^{t-1}Y_1^{t}}>0$. Let $\epsilon = 
\min(F^t_{Y_1^{t-1}Y_j^{t}},F^t_{Y_{j'}^{t-1}Y_1^{t}})$. Observe that if we 
increase $F^t_{Y_1^{t-1}Y_1^{t}}$, $F^t_{Y_{j'}^{t-1}Y_j^{t}}$ by $\epsilon$ and 
decrease $F^t_{Y_1^{t-1}Y_j^{t}}$, $F^t_{Y_{j'}^{t-1}Y_1^{t}}$ 
by $\epsilon$, we obtain another feasible solution. The cost difference of the 
two solutions is
$D=\epsilon (d(Y_1^{t-1},Y_j^{t}) + d(Y_{j'}^{t-1},Y_1^{t}) - d(Y_1^{t-1},Y_1^{t}) - 
d(Y_{j'}^{t-1},Y_j^{t}))$. If we prove that $D$ is
no negative, we are done. We show the latter using the fact that $Y_1^{t-1} \leq 
Y_{j'}^{t-1}$ and $Y_1^{t} \leq Y_j^{t}$. More precisely,
\begin{itemize}
 \item If $Y_1^{t-1} \leq Y_1^{t}$ then $D\geq 0$ since $Y_1^{t} \leq Y_j^{t}$.
 \item If $Y_1^{t-1} \geq Y_1^{t}$ then $D\geq 0$ since $Y_1^{t-1} \leq 
 Y_{j'}^{t-1}$.
\end{itemize}

Until now, we have shown that in the optimal solution, the node $Y_1^{t-1}$ 
sends all of her flow to the node $Y_1^{t}$. Meaning that
$Y_1^{t}$ does not receive flow by any other node apart from $Y_1^{t-1}$. By 
repeating the same argument,
it follows that in the optimal solution each node $Y_j^{t-1}$ sends all of her flow 
to $Y_j^{t}$.
\end{proof}

\tikzstyle{vertex}=[circle,fill=black!25,minimum size=27pt,inner sep=0pt]
\tikzstyle{weight} = [font=\small,above=0pt]
\tikzstyle{weight2} = [font=\small,above=4pt,left=9pt]
\tikzstyle{weight3} = [font=\small,below=5pt,left]
\tikzstyle{weight4} = [font=\small,below=10pt,left]
\tikzstyle{weight5} = [font=\small,below]
\tikzstyle{weight6} = [font=\small,above=12pt]

\tikzstyle{edge} = [draw,thick,->]
\usetikzlibrary{calc,arrows.meta,positioning}

\begin{figure}[ht] 
\centering	
\begin{tikzpicture}[scale=0.9]
\node[vertex] (1) at (0,0) {$Y_1^0$};
\node[vertex] (2) at (0,-1) {$Y_2^0$};
\node[vertex] (3) at (0,-5) {$Y_{K\cdot N}^0$};
\node at ($(2)!.5!(3)$) {\vdots};
\node[vertex] (4) at (3,0) {$Y_1^1$};
\node[vertex] (5) at (3,-1) {$Y_2^1$};
\node[vertex] (6) at (3,-5) {$Y_{K\cdot N}^1$};
\node at ($(5)!.5!(6)$) {\vdots};

\node[vertex] (10) at (6,0) {$Y_1^{t-1}$};
\node[vertex] (11) at (6,-2) {$Y_{j'}^{t-1}$};
\node[vertex] (12) at (6,-5) {$Y_{K\cdot N}^{t-1}$};
\node at ($(11)!.5!(12)$) {\vdots};
\node[vertex] (13) at (9,0) {$Y_1^t$};
\node[vertex] (14) at (9,-2.5) {$Y_j^t$};
\node[vertex] (15) at (9,-5) {$Y_{K\cdot N}^t$};
\node at ($(14)!.5!(15)$) {\vdots};

\node at ($(4)!.5!(10)$) {\ldots};
\node at ($(6)!.5!(12)$) {\ldots};
 \path[edge] (10) -- node[weight] {$F_{Y_1^{t-1}Y_1^t}^t$} (13);
 \path[edge] (10) -- node[weight6] {$F_{Y^{t-1}_1Y^t_j}^t$} (14);
 
  \path[edge] (11) -- node[weight4] {$F_{Y^{t-1}_{j'}Y^t_1}^t$} (13);
 \path[edge] (11) -- node[weight5] {$F_{Y^{t-1}_{j'}Y^t_j}^t$} (14);
 \path[edge] (2) -- node[weight] {} (5);
 \path[edge] (1) -- node[weight] {$F_{11}^1$} (4);
 \path[edge] (1) -- node[weight] {} (5);
 \path[edge] (2) -- node[weight2] {$F_{21}^1$} (4);
 \path[edge] (3) -- node[weight] {} (6);

 \node[vertex] (20) at (12,0) {$Y_1^{T-1}$};
\node[vertex] (21) at (12,-1) {$Y_2^{T-1}$};
\node[vertex] (22) at (12,-5) {$Y_{K\cdot N}^{T-1}$};
\node at ($(21)!.5!(22)$) {\vdots};
\node[vertex] (23) at (15,0) {$Y_1^T$};
\node[vertex] (24) at (15,-1) {$Y_2^T$};
\node[vertex] (25) at (15,-5) {$Y_{K\cdot N}^T$};
\node at ($(24)!.5!(25)$) {\vdots};

\node at ($(13)!.5!(20)$) {\ldots};
\node at ($(15)!.5!(22)$) {\ldots};

 \path[edge] (20) -- node[weight] {} (24);
 \path[edge] (20) -- node[weight] {$F_{11}^T$} (23);
 \path[edge] (21) -- node[weight] {} (24);
 \path[edge] (21) -- node[weight2] {$F_{21}^T$} (23);
 \path[edge] (22) -- node[weight] {} (25);
\node at ($(10)!.5!(11)$) {\vdots};
\node at ($(13)!.5!(14)$) {\vdots};
\path[edge] (12) -- node[weight] {} (15);

\end{tikzpicture}
\caption{The flow described by LP~(\ref{eq:LP2}).} \label{f:flow}
\end{figure}
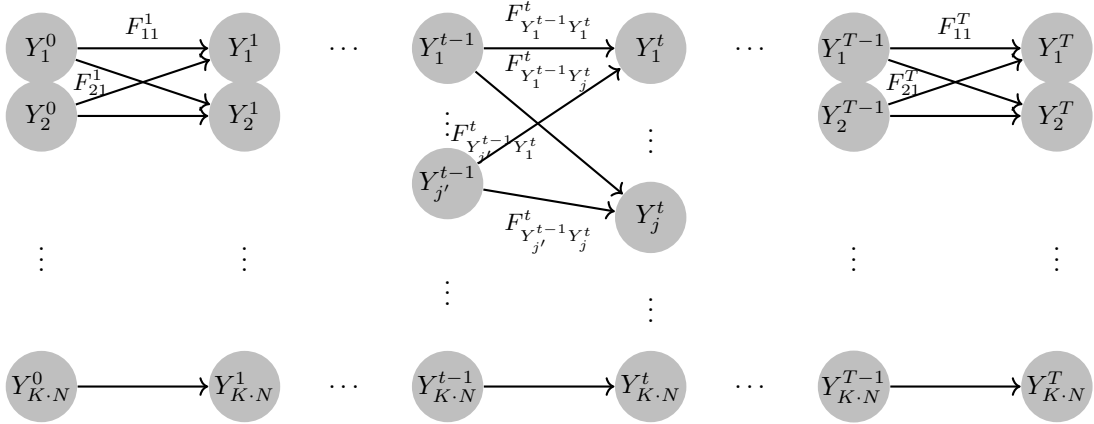

\vspace{0.5cm}
\noindent
\textbf{Lemma \ref{l:connection_cost}}
\emph{Let ConCost$^t_i(Sol_p)$ denote the connection cost of agent $i$ at 
stage $t$ in $Sol_p$ of Definition~\ref{d:Sol_p}. Then}

\[\frac{1}{N}\sum_{p=1}^N ConCost^t_i(Sol_p) =\sum_{i\in C}\sum_{j\in V} 
d(\text{Loc}(i,t),j)x_{ij}^t
\]

\begin{proof}
We will prove that $\frac{1}{N}\sum_{p=1}^N\text{ConCost}_i^t(Sol_p)$
equals $\sum_{j\in V} d(\text{Loc}(i,t),j)x_{ij}^t$.
We remind that by Assumption~\ref{a:1}, $c_j$ is $1/N$ if $j \in V_t^+$ and $0$ 
otherwise. As a result, in the optimal fractional solution, each agent $i$ finds the 
$N$ closest to $\text{Loc}(i,t)$ nodes of $V_t^+$ and receives a $1/N$ amount 
of service from each one of them.
Let us call this set $N_i^t$.
By Observation~\ref{o:1}, the
nodes in $N_i^t$ must be consecutive nodes of $V_t^+$ i.e. $N_i^t = 
\{Y_l^t,\dots,Y_{l+N-1}^t\}$ and 

\[\sum\limits_{j\in V} d(\text{Loc}(i,t),j)x_{ij}^t =  \sum\limits_{j = l}^{l+N-1} 
d(\text{Loc}(i,t),Y_j^t)/N\]

\noindent Since $Sol_p$ puts facilities in the positions $\{Y_{(m-1)\cdot N + 
p}^t\}_{m=1}^K$,
there exists a unique node $Y_{l(p)}^t \in N_i^t$ in which $Sol_p$ puts a facility.
$Y_{l(p)}^t$ is the closest node to $\text{Loc}(i,t)$ from all the nodes in which 
$Sol_p$ puts a facility.
As a result, $ConCost_i^t(Sol_p)= d(\text{Loc}(i),Y_{l(p)}^t)$. Now, summing  
over $p$ we get,
\begin{eqnarray*}
\frac{1}{N}\sum\limits_{p=1}^N ConCost_i^t(Sol_p) &=& 
\frac{1}{N}\sum\limits_{p=1}^N d(\text{Loc}(i),Y_{l(p)}^t)\\
&=& \sum\limits_{j=l}^{l+N-1} d(\text{Loc}(i),Y_j^t)/N\\
&=&\sum\limits_{j\in V} d(\text{Loc}(i,t),j)x_{ij}^t
\end{eqnarray*}
\end{proof}

\end{subsection}

\begin{subsection}{Omitted proofs of Section \ref{s:online}}\label{app:2}
\textbf{Lemma \ref{l:z1}}
\emph{Let $z = (z_1,z_2)$ denote the values of the variables $z_1,z_2$ after 
Step~1 of Algorithm~\ref{alg:double_coverage}. Then,}
\[\sum_{k=1}^2|z_k - x_k^{t-1}| \leq 2\sum_{k=1}^2 |y_k^t - y_k^{t-1}| -\Phi_t(z) 
+\Phi_{t-1}(x^{t-1})\]

\begin{proof}
Assume that $x_2^{t-1} \leq \alpha_1^t$, then 
Algorithm~\ref{alg:double_coverage} will  first
move facility $2$ to $\alpha_1^t$ ($z_1 = x_1^{t-1},z_2 = \alpha_1^t$), paying 
moving cost equal to $|\alpha_1^t - x_2^{t-1}|$.
This moving cost can be bounded with the use of the potential function $\Phi$. 
More specifically, we have that $\Phi_t(z)- \Phi_t(x^{t-1}) 
+ \Phi_t(x^{t-1}) - \Phi_{t-1}(x^{t-1})$ 
\begin{eqnarray}\label{in}
&=& \Phi_t(z) - \Phi_t(x^{t-1})\nonumber
+2 \sum_{k=1}^2(|y_k^t - x_k^{t-1}| - |y_k^{t-1} - x_k^{t-1}|)\nonumber\\
&\leq& \Phi_t(z) - \Phi_t(x^{t-1}) + 2\sum_{k=1}^2 |y_k^t - y_k^{t-1}|
\end{eqnarray}
In the considered case $z_1 = x_1^{t-1}, z_2 = \alpha_1^t$, the difference 
$\Phi_t(z) - \Phi_t(x^{t-1})$ in the potential function  equals the quantity $2(|y_2^t 
- \alpha_1^t| - |y_2^t - x_2^{t-1}|) + |x_1^{t-1}-\alpha_1^t| - |x_1^{t-1}-x_2^{t-1}|$. 
By the definition of solution $y$ in Lemma~\ref{l:competive_with_y},
either $y_1^t$ or $y_2^t$ lies in the interval $[\alpha_1^t,\alpha_n^t]$.  Since 
either $y_1^t$ or $y_2^t$ lies in the interval $[a_1^t,a_2^t]$ and $y_1^t \leq 
y_2^t$, we have that
$a_1^t \leq y_2^t$. Meaning that $z_2$ is closer to $y_2^t$ than $x_2^{t-1}$ 
and consequently $2(|y_2^t-a_1^t | - |y_2^t-x_2^{t-1}|) = -2|x_2^{t-1} - a_1^t|$. 
Therefore, $\Phi_t(z) - \Phi_t(x^{t-1})= -2|x_2^{t-1} - a_1^t|+ |x_1^{t-1}-\alpha_1^t| 
- |x_1^{t-1}-x_2^{t-1}|= - |a_1^t - x_2^{t-1}|= - |z_2 - x_2^{t-1}|$,
which completes the proof of Lemma~\ref{l:z1} for this case of Step~1.

Notice that inequality~(\ref{in}) holds for all three cases of Step~1. Thus, one just 
need to
prove that $\Phi_t(z) - \Phi_t(x^{t-1}) \leq - \sum_{k=1}^2|z_k - x_k^{t-1}|$
for the other two cases. We prove it for the third case of Step~1, since the 
second case ($x_1^{t-1} \geq \alpha_n^t$) is just symmetric to the first case.

In the third case of Step~1, we have that $x_1^{t-1}<a_1^t$, 
$x_2^{t-1}>a_n^t,z_1 = x_1^{t-1}+\min(|x_1^{t-1}-a_1^t|,|x_2^{t-1}-a_n^t|)$ and 
$z_2 = x_2^{t-1}-\min(|x_1^{t-1}-a_1^t|,|x_2^{t-1}-a_n^t|)$. The difference 
$\Phi_t(z) - \Phi_t(x^{t-1})$ in the potential function  equals the quantity $2(|z_1 - 
y_1^t| - |x_1^{t-1} - y_1^t|+|z_2-y_2^t|-|x_2^{t-1}-y_2^t|) + |z_1-z_2| - 
|x_1^{t-1}-x_2^{t-1}|$.
Now, $|z_1 - z_2|$ - $|x_1^{t-1} - x_2^{t-1}| = 
-2\min(|x_1^{t-1}-\alpha_1^t|,|x_2^{t-1}-\alpha_n^t|) = - \sum_{k=1}^2|z_k - 
x^{t-1}_k|$.
Assume that $y_1^t \in [a_1^t,a_n^t]$, then $\sum_{k=1}^2 (|z_k - y_k^t| - 
|x_k^{t-1} - y_k^t|)\leq 0$ since
$|z_1 - y_1^t| - |x_1^{t-1} - y_1^t| 
=-\min(|x_1^{t-1}-\alpha_1^t|,|x_2^{t-1}-\alpha_n^t|)$ and $|z_2 - y_2^t| - 
|x_2^{t-1} - y_2^t| \leq \min(|x_1^{t-1}-\alpha_1^t|,|x_2^{t-1}-\alpha_n^t|)$.
As a result, inequality~(\ref{in}) holds. Using the same argument in case $y_2^t 
\in [a_1^t,a_n^t]$ completes the proof.
\end{proof}

\vspace{0.5cm}

\noindent
\textbf{Lemma \ref{l:z2}}
\emph{Let $x^t = (x_1^t,x_2^t)$ denote the locations of facilities at stage $t$ 
after the execution of Step~2.
Then,} \[ \sum_{k=1}^2 [H(C_{kt}) + |x^t_k - z_k|]  \leq 21\sum_{k=1}^2H(C_{kt}^*) 
-\Phi_t(x^t) + \Phi_t(z)\]

\begin{proof}
Observe that by Algorithm~\ref{alg:double_coverage}, either $a_1^t \leq z_1 \leq 
a_n^t$
or $a_1^t \leq z_2 \leq a_n^t$. As a result, we need to prove the claim for the 
following 4 cases:

\begin{itemize}
 \item $a_1\leq z_1 \leq a_n$ \textbf{and} $z_2- a_n\geq 3H(C_t)$
 \item $a_1\leq z_1 \leq a_n$ \textbf{and} $z_2- a_n < 3H(C_t)$
 \item $a_1\leq z_2 \leq a_n$ \textbf{and} $a_1- z_1\geq 3H(C_t)$
 \item $a_1\leq z_2 \leq a_n$ \textbf{and} $a_1- z_1 < 3H(C_t)$
\end{itemize}
We will prove just the first and the second case since the third is symmetric to 
the first and the forth is symmetric to the second.

In case $a_1\leq z_1 \leq a_n$ and $z_2- a_n\geq 3H(C_t)$,  
Algorithm~\ref{alg:double_coverage} puts
facility $1$ in the median of $C_t$, namely $x_1^t =M_{C_t}$ (or $x_1^t \in 
M_{C_t}$ in case the number of agents is even), and moves facility $2$  to the 
left by a distance of $3H(C_t)$.

\begin{figure}[h]
\centering
\begin{tikzpicture}
\draw (0,0) -- (10,0);
\coordinate (y1p) at (1.5,0);
\coordinate (y1a) at (3,0);
\coordinate (y2p) at (9,0);
\coordinate (y2a) at  (6,0);
\coordinate (y1O) at  (2,0);
\coordinate (y2O) at  (6,0);
\coordinate (a1) at (0.5,0);
\coordinate (a2) at (5.5,0);
\filldraw (y1p) node[align=center,   below=4pt] {$z_1$} --
(y1p)  node[cross,thick,minimum size=4pt] {}     --
(y1a)  node[align=center, below] {$x_1^t$}     --
(y1a)  node[cross,thick,minimum size=4pt] {}     --
(y2p) node[align=center,  below=3pt] {$z_2$} --
(y2p)  node[cross,thick,minimum size=3pt] {}     --
(a1) circle (2pt) node[align=center, above] {$a_1^t$} --
(a2) circle (2pt) node[align=center, above] {$a_n^t$} --
(1,0) circle (2pt)  --
(2.7,0) circle (2pt)   --
(3.6,0) circle (2pt)  --
(4.4,0) circle (2pt)  --
(y2a)  node[cross,thick,minimum size=4pt] {}     --
(y2a) node[align=center, below] {$x_2^{t}$} ;
\draw[->,red] (y1p) to[out=45,in=135] (y1a);
\draw[->,red] (y2p) to[out=150,in=30]  (y2a);
\draw[<->] (9,0.55) -- node[above]{$\geq 3H(C_t)$} (5.5,0.55) ;
\draw[<->,red] (6,-0.55) -- node[below=4pt]{$3H(C_t)$} (9,-0.55) ;
\end{tikzpicture}
\end{figure}

First note that $\sum_{k=1}^2H(C_{kt}) \leq H(C_t)$ since $x_1^t\in M_{C_t}$. 
Then
$|x_1^t - z_1| \leq |a_1^t - a_n^t|\leq H(C_t)$ because both $x_1^t$ and $z_1$ 
lie in
the interval $[a_1^t,a_n^t]$ and $|x_2^t - z_2|=3H(C_t)$ by 
Algorithm~\ref{alg:double_coverage}. Therefore, we have that 
$\sum_{k=1}^2H(C_{kt}) + |x_k^t-z_k|\leq 5H(C_t)$. 

By the geometry of this case and the aforementioned bounds,
\begin{eqnarray}
 \Phi_t(x^t) - \Phi_t(z) &=& 2\sum_{k=1}^2\left(|x_k^t -y_k^t| - |z_k -y_k^t|\right)  + 
 |x_1^t - x_2^t| - |z_1-z_2| \nonumber\\
 &\leq& 2\sum_{k=1}^2\left(|x_k^t -y_k^t| - |z_k -y_k^t|\right)  -2H(C_t)\nonumber 
\end{eqnarray}

\noindent
Since $\sum_{k=1}^2\left[H(C_{kt}) + |x_k^t-z_k|\right] \leq 
5H(C_t)$, the 
challenge is bounding $\sum_{k=1}^2\left(|x_k^t -y_k^t| - |z_k -y_k^t|\right)$ by 
the connection cost
$\sum_{k=1}^2H(C_{kt}^*)$. In case $\sum_{k=1}^2H(C_{kt}^*) = H(C_t)$ 
meaning that $y^t$ connects all agents to just one facility things are quite easy, 
at least intuitively. 
The real difficulty arises when $C_{1t} \neq \emptyset$ and $C_{2t} \neq 
\emptyset$,
where $\sum_{k=1}^2H(C_{kt}^*)$ can be arbitrarily smaller than $H(C_t)$. As we 
will see in this case
$x^t$ gets closer to $y^t$ and the term $\sum_{k=1}^2\left(|x_k^t -y_k^t| - |z_k 
-y_k^t|\right)$
becomes negative.\\

We start with the most challenging case, where $C_{1t}^* \neq 
\emptyset$ and $C_{2t}^* \neq \emptyset$.
We remind that our goal is showing that $x^t$ gets closer to $y^t$. Since 
$C_{2t}^* \neq \emptyset$ and $y_2^t \in M_{C_{2t}^*}$ 
we get that $y_2^t \leq a_n^t$ and as a result $|x_2^t -y_2^t| - |z_2 -y_2^t| = 
|x_2^t -z_2| -3 H(C_t).$

\begin{figure}[h]
	\centering
\begin{tikzpicture}\label{f:3}
\draw (0,0) -- (10,0);
\coordinate (y1p) at (1.5,0);
\coordinate (y1a) at (3,0);
\coordinate (y2p) at (9,0);
\coordinate (y2a) at  (6,0);
\coordinate (y1O) at  (2,0);
\coordinate (y2O) at  (4,0);
\coordinate (a1) at (0.5,0);
\coordinate (a2) at (5.5,0);
\filldraw (y1p) node[align=center,   below=4pt] {$z_1$} --
(y1p)  node[cross,thick,minimum size=4pt] {}     --
(y1O)  node[cross,thick,minimum size=4pt] {}     --
(y2O)  node[cross,thick,minimum size=4pt] {}     --
(y1a)  node[align=center, below] {$x_1^t$}     --
(y1a)  node[cross,thick,minimum size=4pt] {}     --
(y2p) node[align=center,  below=3pt] {$z_2$} --
(y2p)  node[cross,thick,minimum size=3pt] {}     --
(y1O)  node[align=center,  below] {$y_1^t$} --
(y2O)  node[align=center,  below] {$y_2^t$} --
(a1) circle (2pt) node[align=center, above] {$a_1^t$} --
(a2) circle (2pt) node[align=center, above] {$a_n^t$} --
(1,0) circle (2pt)  --
(2.7,0) circle (2pt)   --
(3.6,0) circle (2pt)  --
(4.4,0) circle (2pt)  --
(y2a)  node[cross,thick,minimum size=4pt] {}     --
(y2a) node[align=center, below] {$x_2^{t}$} ;
\draw[->,red] (y1p) to[out=45,in=135] (y1a);
\draw[->,red] (y2p) to[out=150,in=30]  (y2a);
\draw[<->] (9,0.55) -- node[above]{$\geq 3H(C_t)$} (5.5,0.55) ;
\draw[<->,red] (6,-0.55) -- node[below=4pt]{$3H(C_t)$} (9,-0.55) ;
\end{tikzpicture}
\end{figure}

\begin{eqnarray}
\Phi_t(x^t) - \Phi_t(z) &\leq& 2\sum_{k=1}^2\left(|x_k^t -y_k^t| - |z_k -y_k^t|\right) 
-2H(C_t) \nonumber\\
&=& 2\left(|x_1^t -y_1^t| - |z_1 -y_1^t|\right) + 2\left(|x_2^t -z_2| - |z_2 
-y_2^t|\right) -2H(C_t) \nonumber\\
&\leq&  2|x_1^t -z_1| -8H(C_t)\nonumber\\
&\leq&  2H(C_t) -8H(C_t)\nonumber\\
&\leq&  -6H(C_t)\nonumber\\
&\leq& \sum_{k=1}^2H(C_{kt}^*) -\sum_{k=1}^2[ H(C_{kt}) + |x_k^t 
-z_k|]\nonumber
\end{eqnarray}

\noindent Now, assume that $C_{1t}^* = \emptyset$ or $C_{2t}^* = \emptyset$ 
meaning that $\sum_{k=1}^2 H(C_{kt}^*) = H(C_t)$.
As a result, bounding \emph{everything} by $H(C_t)$ serves our purpose. More 
formally,
\begin{eqnarray*}
\Phi_t(x^t) - \Phi_t(z) &\leq& 2\sum_{k=1}^2\left(|x_k^t- y_k^t| - |z_k -y_k^t| \right) 
-2H(C_t)\\
&\leq& 2\sum_{k=1}^2|x_k^t -z_k| -2H(C_t)\\
&\leq& 6H(C_t)\\
&\leq& 11 H(C_t) -\sum_{k=1}^2 \left[ H(C_{kt}) + |x_k^t -z_k| \right]\\
&=& 11\sum_{k=1}^2H(C_{kt}^*) -\sum_{k=1}^2\left[ H(C_{kt}) + |x_k^t -z_k| \right]
\end{eqnarray*}

\noindent The forth inequality follows from the fact that $\sum_{k=1}^2H(C_{kt}) 
+ |x_k^t-z_k|\leq 5H(C_t)$.

We now need to treat the second case where $a_1\leq z_1 \leq a_n$ and $z_2- 
a_n < 3H(C_t)$. Since Algorithm~\ref{alg:double_coverage}
computes the optimal clustering $(C_{1t},C_{2t})$ and puts $x_1^t$ in the interval 
$M_{C_{1t}}$ and 
$x_2^t$ in the interval $M_{C_{2t}}$, we are ensured that the connection cost of 
our solution is less than
the connection cost of $y^t$, $\sum_{k=1}^2H(C_{kt}) \leq 
\sum_{k=1}^2H(C_{kt}^*)$, so we are mostly concerned in bounding
$\sum_{k=1}^2|x_k^t -z_k|$.

\begin{figure}[h]
\centering
\begin{tikzpicture}
\draw (0,0) -- (10,0);
\coordinate (y1p) at (1.5,0);
\coordinate (y1a) at (3,0);
\coordinate (y2p) at (9,0);
\coordinate (y2a) at  (4,0);
\coordinate (y1O) at  (2,0);
\coordinate (y2O) at  (6,0);
\coordinate (a1) at (0.5,0);
\coordinate (a2) at (6,0);
\filldraw (y1p) circle (2pt) node[align=center,   below] {$z_1 $} --
(y1a) circle (2pt) node[align=center, below] {$x_1^t$}     --
(y2p) circle (2pt) node[align=center,  below] {$z_2$} --
(a1) circle (2pt) node[align=center, above] {$a_1^t$} --
(a2) circle (2pt) node[align=center, above] {$a_n^t$} --
(1,0) circle (2pt)  --
(3.7,0) circle (2pt)   --
(4.6,0) circle (2pt)  --
(5.4,0) circle (2pt)  --
(y2a) circle (2pt) node[align=center, below] {$x_2^{t}$} ;
\draw[->,red] (y1p) to[out=45,in=135] (y1a);
\draw[->,red] (y2p) to[out=165,in=15]  (y2a);
\draw[<->] (9,0.55) -- node[above]{$\leq 3H(C_t)$} (6,0.55) ;
\end{tikzpicture}
\end{figure}

\noindent The easy case is when $\sum_{k=1}^2H(C_{kt}^*) = H(C_t)$. A small 
difference with the previous case is that we don't know how $|x_2^t - z_2|$ is. 
However,  $z_1, x_1^t, x_2^t \in [a_1^t, \ldots a_n^t] $ and 
$|x_2^t-z_2|=|x_2^t-a_n^t|+|a_n^t-z_2|$. Thus, $|x_1^t-z_1|+|x_2^t-a_n^t|\leq 
H(C_t)$, $|a_n^t-z_2| \leq 3H(C_t)$ and therefore $\sum_{k=1}^2 [H(C_{kt}) + 
|x_k^t-z_k|]\leq 5H(C_t)$. So we can again bound \emph{everything}
by $H(C_t).$

\begin{eqnarray*}
\Phi_t(x^t) - \Phi_t(z) &=& 2\sum_{k=1}^2\left(|x_k^t -y_k^t| - |z_k -y_k^t|\right) + 
|x_1^t - x_2^t| - |z_1 - z_2|\\
&\leq&  3\sum_{k=1}^2|x_k^t -z_k|\\
&\leq& 4\sum_{k=1}^2|x_k^t -z_k| - \sum_{k=1}^2|x_k^t -z_k|\\
&\leq& 20H(C_t) -\sum_{k=1}^2\left[|x_k^t -z_k|\right]\\
&\leq& 21\sum_{k=1}^2H(C_{kt}^*) -\sum_{k=1}^2\left[ H(C_{kt}) + |x_k^t 
-z_k|\right]
\end{eqnarray*}
\noindent Things become more complicated, when the connection cost 
$\sum_{k=1}^2H(C_{kt}^*)$ is
relatively small ($C_{1t}^* \neq \emptyset$ and $C_{2t}^* \neq \emptyset$), 
where bounding everything by $H(C_t)$
does not work. However, the solutions $x^t$ and $y^t$ will be relatively close in 
this case. More formally,
\begin{eqnarray*}
\Phi_t(x^t) - \Phi_t(z) &=& 2\sum_{k=1}^2\left(|x_k^t -y_k^t| - |z_k -y_k^t|\right) + 
|x_1^t - x_2^t| - |z_1 - z_2|\\
&=& 2\sum_{k=1}^2|x_k^t -y_k^t| - 2\sum_{k=1}^2|z_k -y_k^t| + 
\sum_{k=1}^2|x_k^t - z_k|\\
&=& 2\sum_{k=1}^2|x_k^t -y_k^t| + 2\sum_{k=1}^2\left(|x_k^t-z_k|-|z_k 
-y_k^t|\right) - \sum_{k=1}^2|x_k^t-z_k|\\
&\leq& 4\sum_{k=1}^2|x_k^t -y_k^t| - \sum_{k=1}^2|x_k^t-z_k|\\
\end{eqnarray*}
We need to upper bound the distance $\sum_{k=1}^2|x_k^t -y_k^t|$. Observe 
that in the solution $x^t$,
the agent at position $a_1^t$ connects to the left facility (facility 1) and  the 
agent at position $a_n^t$ 
connects to the right facility (facility 2), $|x_1^t - a_1^t| + |x_2^t - a_n^t| \leq 
\sum_{k=1}^2H(C_{kt}).$ 
Since $C_{1t}^*\neq \emptyset$ and $C_{2t}^*\neq \emptyset$, the same holds 
for the solution $y^t$. As a result, 
\begin{eqnarray*}
\Phi_t(x^t) - \Phi_t(z) &\leq& 4\sum_{k=1}^2|x_k^t -y_k^t| - 
\sum_{k=1}^2|x_k^t-z_k|\\
&\leq& 4\left(|x_1^t -a_1^t|+|y_1^t -a_1^t| + |x_2^t -a_n^t| + |y_2^t -a_n^t|\right) 
- \sum_{k=1}^2|x_k^t-z_k|\\
&\leq& 4\sum_{k=1}^2\left[H(C_{kt}) + H(C_{kt}^*)\right] - 
\sum_{k=1}^2|x_k^t-z_k|\\
&\leq& 9\sum_{k=1}^2H(C_{kt}^*) - \sum_{k=1}^2\left[ H(C_{kt}) + |x_k^t-z_k|\right]
\end{eqnarray*}
\end{proof}

\end{subsection}

\end{document}